\def\calT{{\cal T}}
\def\calQ{{\cal Q}}
\def\disD{{\cal D}}
\def\disQ{{\cal Q}}
\def\calA{{\cal A}}
\def\dgxt{{D_{\Gamma_{X(t)}}}}
\def\gxt{{\Gamma_{X(t)}}}
\def\bs{\hfill $\blacksquare$}
\newcommand{\dd}{\partial}
\newcommand{\CO}[2]{C^{(1)}_{#2,#1}}
\newcommand{\CT}[2]{C^{(2)}_{#2,#1}}
\newcommand\pFq[3]{{}_{#1}F_{#2}\left(#3\right){}}
\DeclarePairedDelimiterX\MeijerM[3]{\lparen}{\rparen}%
{\begin{smallmatrix}#1 \\ #2\end{smallmatrix}\delimsize\vert\,#3}
\newcommand\MeijerG[8][]{%
G^{\,#2,#3}_{#4,#5}\MeijerM[#1]{#6}{#7}{#8}}
\begin{document}

\begin{frontmatter}

\title{Geometric Local Variance Gamma model}

\author{P.~Carr}
\ead{petercarr@nyu.edu}

\author{A.~Itkin\corref{cor1}}
\ead{aitkin@nyu.edu}

\address{Tandon School of Engineering, New York University, \\
12 Metro Tech Center, RH 517E, Brooklyn NY 11201, USA}

\begin{abstract}
This paper describes another extension of the Local Variance Gamma model originally proposed by P. Carr in 2008, and then further elaborated on by Carr and Nadtochiy, 2017 (CN2017), and Carr and Itkin, 2018 (CI2018). As compared with the latest version of the model developed in CI2018 and called the ELVG (the Expanded Local Variance Gamma model), here we provide two innovations. First, in all previous papers the model was constructed based on a Gamma time-changed {\it arithmetic} Brownian motion: with no drift in CI2017, and with drift in CI2018, and the local variance to be a function of the spot level only. In contrast, here we develop a {\it geometric} version of this model with drift. Second, in CN2017 the model was calibrated to option smiles assuming the local variance is a {\it piecewise constant} function of strike, while in CI2018 the local variance is a {\it piecewise linear} function of strike. In this paper we consider 3 {\it piecewise linear} models: the local variance as a function of strike, the local variance as function of log-strike, and the local volatility as a function of strike (so, the local variance is a {\it piecewise quadratic} function of strike). We show that for all these new constructions it is still possible to derive an ordinary differential equation for the option price, which plays a role of Dupire's equation for the standard local volatility model, and, moreover, it can be solved in closed form. Finally, similar to CI2018, we show that given multiple smiles the whole local variance/volatility surface can be recovered which does not require solving any optimization problem. Instead, it can be done term-by-term by solving a system of non-linear algebraic equations for each maturity which is fast.
\end{abstract}

\begin{keyword}
local volatility, stochastic clock, geometric process, Gamma distribution, piecewise linear volatility, Variance Gamma process, closed form solution, fast calibration, no-arbitrage.
\end{keyword}

\end{frontmatter}

\section{Introduction}

The Local Variance Gamma (LVG) volatility model was first introduced by P. Carr in 2008 and then presented in \cite{CarrLVGOrig2014,CarrNadtochiy2017} as an extension of the local volatility model by \cite{Dupire:94} and \cite{derman/kani:94}. The latter was developed on the top of the celebrating Black-Scholes model to take into account the existence of option smile. The main advantage of all local volatility models is that given European options prices or their implied volatilities at points $(T,K)$ where $K,T$ are the option strike and time to maturity, they are able to exactly replicate the local volatility function $\sigma(T,K)$ at these points. This process is called calibration of the local volatility (or, alternatively, implied volatility) surface, see survey in  \cite{ELVG,ItkinLipton2017} and references therein.

However, as compared with the classical local volatility model, the LVG and ELVG have several advantages. First, they are richer in the financial sense. Indeed, it is worth noting that the term "local" in the name of the LVG/ELVG models is a bit confusing. This is because, e.g., the ELVG is constructed by equipping an arithmetic Brownian motion with drift and local volatility by stochastic time change $\gxt$. Here $\Gamma_t$ is a Gamma stochastic variable, and $X(t)$ is a deterministic function of time $t$. As stochastic change is one of the ways of introducing stochastic volatility, it could be observed that the LVG/ELVG is actually a local stochastic volatility (LSV) model which combines local and stochastic features of the volatility process. For more information on the LSV models, see \cite{Bergomi2016,kienitz2012financial}.

Another advantage of the LVG/ELVG is that their calibration is computationally more efficient. This is because this construction gives rise not to a partial differential equation (which in the classical case is known as Dupire's equation), but to a partial differential difference equation (PDDE). The latter is actually an ordinary differential equation (ODE) and permits both explicit calibration and fast numerical valuation. In particular, calibration of the local variance surface does not require any optimization method, rather just a root solver, \cite{ELVG}.

As discussed in \cite{ItkinLipton2017}, given the market quotes of European options for various maturities and strikes, the local (and then implied) volatility surface can be obtained by directly solving the Dupire equation using either analytical or numerical methods. The advantage of such an approach is that it guarantees no-arbitrage if the corresponding analytical or numerical method does preserve no-arbitrage (including  various interpolations, etc.). Obviously, solving Dupire's PDE requires either numerical methods, e.g. that in \cite{Coleman2001}, or, as in \cite{ItkinLipton2017}, a semi-analytic method which: i) first uses the Laplace-Carson transform, and ii) then applies various transformations to obtain a closed form solution of the transformed equation in terms of Kummer Hypergeometric functions. Still, it requires an inverse Laplace transform to obtain the final solution.
To make the second approach tractable, some assumptions should be made about the behavior of the local/implied volatility surface at strikes and maturities where the market quotes are not known. Usually, the corresponding local variance is assumed to be either piecewise constant, \cite{LiptonSepp2011}, or piecewise linear \cite{ItkinLipton2017} in the log-strike space, and piecewise constant in the time to maturity space.
A similar assumption is also necessary to make the LVG/ELVG models tractable. In particular, in \cite{CarrNadtochiy2017} the model was calibrated to option smiles assuming the local variance is a {\it piecewise constant} function of strike, while in \cite{ELVG} the local variance is a {\it piecewise linear} function of strike.

Despite these nice features of the ELVG, one possible problem could be that the model is developed based on the arithmetic Brownian motion with drift. That means that the underlying, in principle, could acquire negative values, which in some cases is undesirable, e.g., if the underlying is a stock price. Therefore, in this paper we describe another extension of the LVG model which operates with a Gamma time-changed {\it geometric} Brownian motion with drift, and the local variance which is a function of the spot level only (so is not a function of time).

Second, in \cite{CarrNadtochiy2017} the model was calibrated to option smiles assuming the local variance is a {\it piecewise constant} function of strike, while in \cite{ELVG} the local variance is a {\it piecewise linear} function of strike. In this paper we consider
 3 {\it piecewise linear} models: the local variance as a function of strike, the local variance as a function of log-strike, and the local volatility as a function of strike (so, the local variance is a {\it piecewise quadratic} function of strike). We show that in this new model it is still possible to derive an ordinary differential equation for the option price, which plays a role of Dupire's equation for the standard local volatility model. Moreover, it all three cases, this equation can be solved in closed form. Finally, similar to \cite{ELVG} we show that given multiple smiles the whole local variance/volatility surface can be recovered which does not require solving any optimization problem. Instead, it can be done term-by-term, and for every maturity the entire calibration is done by solving a system of non-linear algebraic equations which is significantly faster.

The rest of the paper is organized as follows. In Section~\ref{Process} the new model, which for an obvious reason we call the Geometric Local Variance Gamma model or the GLVG, is formulated. In Section~\ref{ForwardEq} we derive a forward equation (which is an ordinary differential equation (ODE)) for Put option prices using a homogeneous Bochner subordination approach. Section~\ref{cases} generalizes this approach by considering the local variance being piecewise constant in time. A closed form solution of the derived ODE is given in terms of Hypergeometric functions for various models of the local variance or volatility. The next Section discusses computation of a source term of this ODE which requires a no-arbitrage interpolation. Using the idea of \cite{ItkinLipton2017}), we show how to construct non-linear interpolation which provides both no-arbitrage, and a nice tractable representation of the source term, so that all integrals in the source term can be computed in closed form. In Section~\ref{calib} calibration of multiple smiles in our model is discussed in detail. To calibrate a single smile we derive a system of nonlinear algebraic equations for the model parameters, and explain how to obtain a smart guess for their initial values.  In Section~\ref{numExp} we discuss the results of some numerical experiments where calibration of the model to the given market smiles is done term-by-term. The last Section concludes.

\section{Stochastic model} \label{Process}

Let $W_t$ be a $\mathbb{Q}$ standard Brownian motion with time index $t \ge 0$. Consider a stochastic process $D_t$ to be a time-homogeneous diffusion with drift $\mu$
\begin{equation} \label{D}
 d D_t = \mu D_t d t + \sigma(D_t) D_t d W_t,
\end{equation}
\noindent where the volatility function $\sigma$ is local and time-homogeneous.

A unique solution to \eqref{D} exists if $\sigma(D) : \mathbb{R} \to \mathbb{R}$ is
Lipschitz continuous in $D$ and satisfies growth conditions at infinity. Since $D$ is a time-homogeneous Markov process, its infinitesimal generator $\calA$ is given by
\begin{equation} \label{gen}
\calA \phi(D) \equiv \left[\mu D \nabla_D  + \frac{1}{2} \sigma^2(D) D^2 \nabla^2_D \right] \phi(D)
\end{equation}
\noindent for all twice differentiable functions $\phi$. Here $\nabla_x$ is a first order
differential operator on $x$. The semigroup of the $D$ process
(which here is an expectation under $\mathbb{Q}$) is
\begin{equation} \label{semig}
\calT^D_t \phi(D_t) = e^{t \calA} \phi(D_t)  = \EQ[\phi(D_t)|D_0 = D], \quad \forall t \ge 0.
\end{equation}
This first equality could be also thought of as the Feynman-Kac theorem representation of the solution to the terminal value problem (see, e.g., \cite{FCT2011}), which connects the expectation in the right hand side to the solution of the corresponding PDE, and then the formal solution of this PDE is given by the exponential operator $e^{t \calA}$ applied to the initial condition $\phi(D_t)$.

In the spirit of \cite{CarrNadtochiy2017,ELVG}, introduce a new process $D_{\Gamma_t}$ which is $D_t$ subordinated by the unbiased Gamma clock $\Gamma_t$. The density of the unbiased Gamma clock $\Gamma_t$ at time $t \ge 0$ is
\begin{equation} \label{gammaDen}
\mathbb{Q}\{\Gamma_t \in d\nu\} = \dfrac{\nu^{m-1} e^{-\nu m /t}}{(t^*)^m \Gamma(m)} d\nu, \quad \nu > 0, \quad m \equiv t/t^*.
\end{equation}
Here $t^* > 0$ is a free parameter of the process, $\Gamma(x)$ is the Gamma function. It is easy to check that
\begin{equation} \label{gammaExp}
\EQ[\Gamma_t] = t.
\end{equation}
\noindent Thus, on average the stochastic gamma clock $\Gamma_t$ runs synchronously with the calendar time $t$.

As applied to the option pricing problem, we introduce a more complex construction.
Namely, consider options written on the underlying process $S_t$. Without loss of generality and for the sake of clearness let us treat below $S_t$ as the stock price process. Let us define $S_t$ as
\begin{equation} \label{sub}
S_t = D_{\Gamma_{X(t)}}
\end{equation}
\noindent where $X(t)$ is a deterministic function of time $t$. We need to determine $X(t)$ such that under a risk-neutral measure $\mathbb{Q}$, the total gains process $\hat{S}_t$, including the underlying price appreciation and continuous dividends $q$, after discounting at the risk free rate $r$ is a martingale, see \cite{Shreve:1992}.

Taking first a derivative of $\hat{S}_t$
\begin{align} \label{mart}
d\hat{S}_t = d \left( e^{-r t} S_t e^{q t} \right) &=  e^{(q-r)t} \left[ (q-r) S_t dt + d S_t\right], \end{align}
\noindent and then an expectation of both parts we obtain
\begin{align} \label{martCond}
\EQ[d \left( e^{(q-r)t} S_t\right)] &=
e^{(q-r)t} \left\{ (q-r)  \EQ[S_t] d t + d \EQ[S_t] \right\}.
\end{align}
So in order for $\hat{S}_t$ to be a martingale, the RHS of \eqref{martCond} should vanish.
Solving the equation
\[  (q-r)  y(t) d t + d y(t) = 0, \qquad y(t) = \EQ[S_t | S_s], \ s < t\]
\noindent we obtain
\begin{align} \label{inLeft}
y(t) &=  \EQ[S_t | S_s] = S_s e^{(r-q) (t-s)}, \\
\EQ[d S_t  | S_s] &= d \EQ[S_t | S_s] = S_s (r-q) e^{(r-q) (t-s)} . \nonumber
\end{align}
On the other hand, from \eqref{sub}
\begin{align} \label{e1}
\EQ[d S_t  | S_s] &= \EQ[d \dgxt  | S_s] = \mu \EQ [\dgxt d\gxt  | S_s] + \EQ[\sigma(\dgxt) \dgxt d W_{\gxt}  | S_s] \\
&= \mu \EQ [\dgxt d \gxt  | S_s], \nonumber
\end{align}
\noindent because the process $W_\gxt$ is a local martingale, see \cite{RevuzYor1999}, chapter 6. Accordingly, the process $W_\gxt$ inherits this property from $W_{\Gamma_t}$, hence $\EQ[\sigma(\dgxt) \dgxt d W_\gxt] = 0$.

To proceed, assume the Gamma process $\Gamma_t$ is independent of $W_t$ (and, accordingly, $\gxt$ is independent of $W_\gxt$. Then the expectation in the RHS of \eqref{e1} can be computed, by first conditioning on $\gxt$, and then integrating over the distribution of $\gxt$ which can be obtained from \eqref{gammaDen} by replacing $t$ with $X(t)$, i.e.
\begin{align} \label{e2}
\EQ [\dgxt d \gxt  | S_s] &= \int_0^\infty  \EQ [\dgxt d \gxt| \gxt = \nu] \dfrac{\nu^{m-1} e^{-\nu m /X(t)}}{(t^*)^m \Gamma(m)} \\
&= \int_0^\infty  \EQ [D_\nu] \dfrac{\nu^{m-1} e^{-\nu m /X(t)}}{(t^*)^m \Gamma(m)} d \nu, \quad
\nu > 0, \quad m \equiv X(t)/t^*. \nonumber
\end{align}
The find $\EQ [D_\nu]$ we take into account \eqref{D} to obtain
\begin{align} \label{e3}
d \EQ [D_\nu] =  \EQ [d D_\nu]
= \EQ [\mu D_\nu d\nu + \sigma(D_\nu) D_\nu d W_\nu] = \mu \EQ [D_\nu] d\nu.
\end{align}
Solving this equation with respect to $y(\nu) = \EQ [D_\nu  | D_s]$,  we obtain $\EQ [D_\nu  | D_s] = D_s e^{\mu (\nu-s)}$. Since we condition on time $s$, it means that $D_s = D_{\Gamma_{X(s)}} = S_s$, and thus
$\EQ [D_\nu  | D_s] = S_s e^{\mu (\nu-s)}$.

Further, we substitute this into \eqref{e2}, set the parameter of the Gamma distribution $t^*$ to be $t^* = X(t)$ (so $m = 1$) and integrate to obtain
\begin{equation} \label{intRight}
\EQ[d S_t  | S_s] = \mu \EQ [\dgxt d \gxt] = S_s e^{- s \mu} \frac{\mu}{1 - \mu X(t)}.
\end{equation}
Finally, equating representations of $\EQ[d S_t  | S_s]$ obtained in \eqref{inLeft} and \eqref{intRight} we arrive at the equation for $X(t)$
\begin{equation} \label{eqX}
S_0 (r-q)e^{(r-q) (t-s)} = S_s e^{- s \mu} \frac{\mu}{1 - \mu X(t)}.
\end{equation}
Assuming $\mu = r - q$, this equation can be solved to provide
\begin{equation} \label{X}
X(t) = \dfrac{1 - e^{-(r-q) t}}{r-q}.
\end{equation}
This expression for $X(t)$ was also used in \cite{ELVG} for the ELVG. We already mentioned that the ELVG could be considered as an {\it arithmetic analog} of our model in this paper, which is {\it geometric} in $D_t$.

It is clear that in the limit $ r \to 0, \ q \to 0$ we have $X(t) = t$. Also based on \eqref{gammaExp}
\begin{equation} \label{gammaXexp}
\EQ[\Gamma_{X(t)}] = X(t).
\end{equation}
Function $X(t)$ starts at zero, i.e. $X(0) = 0$ \footnote{So our assumption made in above that $X(0) = 0$ is consistent.}, and is a continuous non-decreasing function of time $t$.
In more detail, if $r - q > 0$, function $X(t)$ is increasing in $t$ in all points  except at $t \to \infty$, where it tends to constant. However, the infinite time horizon doesn't have much practical sense, therefore for any finite time $t$ function $X(t)$ can be treated as an increasing function in $t$. In the other case when $r - q < 0$, function $X(t)$ is strictly increasing $\forall t \in [0,\infty)$. This means that, overall, $X(t)$ has all properties of a good clock. Accordingly, $\Gamma_{X(t)}$ has all properties of a random time.

Thus, we managed to demonstrate that with this choice of $\mu$ and $X(t)$ the right hands part of \eqref{martCond} vanishes, and our discounted stock process with allowance for non-zero interest rates and continuous dividends becomes a martingale. So the proposed construction can be used for option pricing.

This setting can be easily generalized for time-dependent interest rates $r(t)$ and continuous dividends $q(t)$. We leave it for the reader.

The next step is to establish a connection between the original and time-changed processes.
It is known from \cite{BochnerPDE1949}  that the process $G_{\Gamma_t}$ defined as
\[ d G_t = \sigma^2(G) G_t d W_t \]
\noindent is a time-homogeneous Markov process. Same is true for the process $(r-q) G_t dt$. Thus, the entire process $D_t$ defined in \eqref{D} is also a time-homogeneous Markov process. Accordingly, the semigroups $T^S_t$ of $S_t$ and $T^D_t$ of $D_{\Gamma_{X(t)}}$ are connected by the Bochner integral\footnote{Here it represents an expectation of the option price with respect to the second stochastic driver - stochastic clock $\nu$.}
\begin{equation} \label{BI}
\calT^S_t U(S) = \int_0^\infty \calT^D_\nu U(S) \mathbb{Q}\{\gxt \in d\nu\}, \quad \forall t \ge 0,
\end{equation}
\noindent where $U(S)$ is a function in the domain of both $\calT^D_t$ and $\calT^S_t$.
It can be derived by exploiting the time homogeneity of the $D$ process, conditioning on the gamma time first, and taking into account the independence of $\Gamma_t$ and $W_t$ (or $\Gamma_\gxt$ and $W_\gxt$ in our case).

As we set parameter $t^*$ of the gamma clock to $t^* = X(t)$, \eqref{BI} and \eqref{gammaDen} imply
\begin{equation} \label{BI1}
\calT^S_{t} U(S) = \int_0^\infty \calT^D_\nu U(S) \dfrac{e^{-\nu/X(t)}}{X(t)} d\nu.
\end{equation}
In what follows for the sake of brevity we call this model as the Geometric Local Variance Gamma model, or the GLVG.

\section{Forward equation for option prices} \label{ForwardEq}

In this section we derive a forward equation for put option prices, which is an analog of the Dupire equation for the standard local volatility model.
In doing so, we closely follow the description in the corresponding section of \cite{ELVG}, as from the derivation point of view the GLVG differs from the ELVG just by the definitions of infinitesimal generator $\calA$ of the process $D_t$.

Let us interpret the index $t$ of the semigroup $\calT^S_t$ as the maturity date $T$ of an European claim with the valuation time $t_v = 0$. Also let the test function $U(S)$ be the payoff of this European claim, i.e.
\begin{equation} \label{payoff}
U(S_T) = e^{-r T}(K - S_T)^+.
\end{equation}
Then define
\begin{equation} \label{P0}
P(S_0,T,K) = \calT^S_T U(S_0)
\end{equation}
\noindent as the European Put value with maturity $T$ at time $t=0$ in the LVGE model. Similarly
\begin{equation} \label{P0D}
P^D(S_0,\nu,K) = \calT^D_\nu U(S_0)
\end{equation}
\noindent would be the European Put value with maturity $\nu$ at time $t=0$ in the model of \eqref{D}\footnote{Below for simplicity of notation we drop the subscript '0'  in $S_0$.}. Then the Bochner integral in \eqref{BI1} takes the form
\begin{equation} \label{Bochner2}
P(S,T,K) =  \int_0^\infty P^D(S,\nu,K) p e^{- p \nu} d \nu,  \quad p \equiv 1/X(T).
\end{equation}
Thus, $P(S,T,K)$ is represented by a Laplace-Carson transform of $P^D(S,\nu,K)$ with $p$ being a parameter of the transform. Note that
\begin{equation} \label{init}
P(S,0,K) = P^D(S,0,K) = U(S).
\end{equation}
To proceed, we need an analog of the Dupire forward PDE for $P^D(S,\nu,K)$.

\subsection{Dupire-like forward PDE \label{dupFWPDE}}

Despite this can be done in many different ways, below for the sake of compatibility we do it in the spirit of \cite{CarrNadtochiy2017}.

First, differentiating \eqref{P0D} by $\nu$ with allowance for \eqref{semig} yields
\begin{align} \label{Dt}
\nabla_\nu P^D(S,\nu,K) &= e^{-r \nu} e^{\nu \calA}\left[ \calA - r\right] U(S)
= e^{-r \nu} \EQ \left[ \calA - r\right] U(S).
\end{align}
We take into account the definition of the generator $\calA$ in \eqref{gen}, and also remind that at $t=0$ we have $D_0 = S_0 \equiv S$. Then \eqref{Dt} transforms to
\begin{align} \label{Dt1}
\nabla_\nu P^D(S,\nu,K) = &-r P^D(S,\nu,K) + \left(r-q\right) S \nabla_S P^D(S,\nu,K)
+ e^{-r \nu}\frac{1}{2} \EQ \left[ \sigma^2(S) S^2 \nabla_S^2 U(S) \right].
\end{align}
However, we need to express the forward equation using a pair of independent variables $(\nu,K)$ while \eqref{Dt} is derived in terms of $(\nu,S)$. To do this, observe that
\begin{align} \label{sir}
\EQ \left[\sigma^2(S) S^2 \nabla_S^2 U(S) \right] &= \EQ \left[\sigma^2(S)S^2  \delta(K-S)\right] = \EQ \left[\sigma^2(K) K^2 \delta(K-S)\right] \\
&=  \EQ \left[\sigma^2(K) K^2 \nabla_K^2 U(S)\right] = e^{r \nu} \sigma^2(K) \nabla_K^2 P^D(S,\nu,K). \nonumber
\end{align}
\noindent where the sifting property of the Dirac delta function $\delta(S-K)$ has been used. Also
\begin{align} \label{term2}
-r & P^D(S,\nu,K) + (r-q) S \nabla_S P^D(S,\nu,K) \\
&= e^{-r \nu} \EQ\left[ -r (K-S)^+ + (r-q)S \fp{(K-S)^+}{S} \right] \nonumber \\
&= e^{-r \nu} \EQ\left[ -r (K-S)^+ - (r-q)(K-S) \fp{(K-S)^+}{S} + (r-q)K \fp{(K-S)^+}{S}\right] \nonumber \\
&= e^{-r \nu} \EQ\left[ -r (K-S)^+ + (r-q)(K-S)^+  - (r-q)K \fp{(K-S)^+}{K}\right] \nonumber \\
&= -q P^D(S,\nu,K) - (r-q) K \nabla_K P^D(S,\nu,K). \nonumber
\end{align}
Therefore, using \eqref{sir} and \eqref{term2}, \eqref{Dt} could be transformed to
\begin{align} \label{Dup1}
\nabla_\nu P^D(S,\nu,K) &= -q P^D(S,\nu,K) - (r-q) K \nabla_K P^D(S,\nu,K) + \frac{1}{2} \sigma^2(K) K^2 \nabla_K^2 P^D(S,\nu,K) \nonumber  \\
&\equiv {\calA}^K P^D(S,\nu,K),  \\
{\calA}^K &= -q  - (r-q) K \nabla_K  + \frac{1}{2} \sigma^2(K) K^2 \nabla_K^2. \nonumber
\end{align}
This equation looks exactly like the Dupire equation with non-zero interest rates and continuous dividends, see, e.g., \cite{Tysk2012} and references therein. Note, that $\calA^K$ is also a time-homogeneous generator.

\subsection{PDDE for a single term} \label{FPDDder}

Our final step is to apply the linear differential operator $\cal L$ defined in  \eqref{Dup1} to both parts of \eqref{Bochner2}. Using time-homogeneity of $D_t$ and again the Dupire equation \eqref{Dup1}, we obtain
\begin{align} \label{b2}
-q &P(S,T,K) - (r-q) K \nabla_K P(S,T,K) + \frac{1}{2} \sigma^2(K) K^2 \nabla_K^2
P(S,T,K) \\
&= \int_0^\infty p e^{- p \nu} \left[-q P^D(S,\nu,K) - (r-q) K \nabla_K P^D(S,\nu,K) + \frac{1}{2} \sigma^2(K) K^2 \nabla_K^2 P^D(S,\nu,K)\right] d \nu \nonumber \\
&= \int_0^\infty p e^{- p \nu} \nabla_\nu P^D(S,\nu,K) d \nu  = - p P^D(S,0,K) + p \int_0^\infty P^D(S,\nu,K) p e^{- p \nu} d \nu \nonumber \\
&= p\left[P(S,T,K) - P^D(S,0,K) \right] = p\left[P(S,T,K) - P(S,0,K) \right], \nonumber
\end{align}
\noindent where in the last line we took into account \eqref{init}.

Thus, finally $P(S,T,K)$ solves the following problem
\begin{align} \label{finDup}
-q P(S,T,K) - (r-q) K \nabla_K P(S,T,K) &+ \frac{1}{2} \sigma^2(K) K^2 \nabla_K^2
P(S,T,K) = \dfrac{P(S,T,K) - P(S,0,K)}{X(T)}, \nonumber \\
P(S,0,K) &= (K-S)^+.
\end{align}
In contrast to the Dupire equation which belongs to the class of PDE, \eqref{finDup} is an ODE, or more precisely a partial divided-difference equation (PDDE), since the derivative in time in the right hands part is now replaced by a divided difference. In the form of an ODE it reads
\begin{equation} \label{finDupPut}
\left[\frac{1}{2} \sigma^2(K) K^2 \nabla_K^2 - (r-q) K \nabla_K - \left(q + \dfrac{1}{X(T)}\right) \right] P(S,T,K) =
- \dfrac{P(S,0,K)}{X(T)}.
\end{equation}
This equation could be solved analytically for some particular form of the local volatility function $\sigma(K)$ which are considered in the next Section. Also in the same way a similar equation could be derived for the Call option price $C_0(S,T,K)$ which reads
\begin{align} \label{finDupCall}
\Big[\frac{1}{2} \sigma^2(K) K^2 \nabla_K^2 + (r-q) K \nabla_K &- \left(q + \dfrac{1}{X(T)}\right) \Big] C_0(S,T,K) = - \dfrac{C_0(S,0,K)}{X(T)}, \nonumber \\
C_0(S,0,K) &= (S-K)^+.
\end{align}

Solving \eqref{finDupPut} or \eqref{finDupCall} provides the way to determine $\sigma(K)$ given market quotes of Call and Put options with maturity $T$. However, this allows calibration of just a single term. Calibration of the entire local volatility surface, in principle, could be done term-by-term (because of the time-homogeneity assumption) if \eqref{finDupPut}, \eqref{finDupCall} could be generalized to this case.

\subsection{PDDE for multiple terms}

This generalization can be done in the same way as presented in \cite{ELVG}, Section 4. Therefore, we refer the reader to that Section while here provide just some useful comments.

To address calibration of multiple smiles  we need to relax the assumption about time-homogeneity of the $D_t$ process defined in \eqref{D}. We assume that the local variance $\sigma(D_t)$ is no more time-homogeneous, but a piecewise constant function of time $\sigma(D_t,t)$.

Let $T_1, T_2, \ldots, T_M$ be the time points at which the variance rate $\sigma^2(D_t)$ jumps deterministically. In other words, at the interval $t \in [T_0,T_1)$, the variance rate is $\sigma^2_0(D_t)$, at $t \in [T_1, T_2)$ it is $\sigma^2_1(D_t)$, etc. This can be also represented as
\begin{align} \label{sigmaPW}
\sigma^2(D_t,t) &= \sum_{i=0}^M \sigma^2_i(D_t) w_i(t), \\
w_i(t) &\equiv {\mathbf 1}_{t - T_i} - {\mathbf 1}_{t - T_{i+1}}, \ i=0,\ldots,M,
\quad T_0 = 0, \ T_{M+1} = \infty. \nonumber
\end{align}
Note, that
\[ \sum_{i=0}^M w_i(t) = {\mathbf 1}_t - {\mathbf 1}_{t-\infty} = 1, \quad \forall t \ge 0.\]
\noindent Therefore, in case when all $\sigma^2_i(D_t)$ are equal, ie, independent on index $i$, \eqref{sigmaPW} reduces to the case considered in the previous Sections.

This implies that the volatility $\sigma(D_t)$ jumps as a function of time at the calendar times $T_0, T_1,\ldots,T_M$, and not at the business times $\nu$ determined by the Gamma clock. Otherwise, the volatility function would have been changed at random (business) times which means it is stochastic. But this definitely lies out of scope of our model. Therefore, we need to change \eqref{sigmaPW} to
\begin{align} \label{sigmaPW_exp}
\sigma^2(D_t,t) &= \sum_{i=0}^M \sigma^2_i(D) {\bar w}_i(\EQ(t)), \\
{\bar w}_i(\EQ(t)) &= {\mathbf 1}_{X^{-1}(t - T_i)} - {\mathbf 1}_{X^{-1}(t - T_{i+1})}, \ i=0,\ldots,M, \nonumber \\
X^{-1}(t) &= \dfrac{1}{q-r} \log \left[1 - (r-q)t \right].
\end{align}
As per the last line, $X(t)$ exists $\forall t \ge 0$ if $q > r$, and $\forall t < 1/(r-q)$ if $r > q$.

Hence, when using \eqref{sub} we have
\begin{align} \label{sigmaPW_exp1}
\sigma^2(D_t, t)\Big|_{t = \Gamma_{X(t)}} &= \sum_{i=0}^M \sigma^2_i(D) \bar{w}_i(X(t)) = \sum_{i=0}^M \sigma^2_i(D) w_i(t).
\end{align}

Accordingly, if the calendar time $t$ belongs to the interval $T_0 \le t < T_1$, the infinitesimal generator $\calA$ of the semigroup $\calT^D_\nu$ is a function of $\sigma(D_t)$ (and not on $\sigma(D_\nu)$). As at $T_0 \le t < T_1$ we assume $\sigma(D) = \sigma_0(D)$, i.e. is constant in time, it doesn't depend of $\nu$. Thus, $\calA$ (which for this interval of time we will denote as $\calA_0)$ is still time-homogeneous.

Similarly, one can see, that for $T_1 \le t < T_2$ the infinitesimal generator $\calA_1$ of the semigroup $\calT^D_\nu$ is also time-homogeneous and depends on $\sigma_1(D)$, etc.

Further, similar to \cite{ELVG} it could be shown that the forward partial divided difference equation for the Put price $P(S,T_i,K), \ i=1,\ldots,M$ reads
\begin{equation} \label{finDupPutM}
\left[\frac{1}{2} \sigma^2(K) K^2 \nabla_K^2 - (r-q) K \nabla_K - \left(q + \dfrac{1}{X(T_i) - X(T_{i-1}}\right) \right] P(S,T_i,K) =
- \dfrac{P(S,T_{i-1},K)}{X(T_i) - X(T_{i-1})}.
\end{equation}
Here the local variance function $\sigma^2(K) =  \sigma^2_i(K)$ as it corresponds to the interval $(T_{i-1}, T_i]$ where the above ODE is solved.

\eqref{finDupPutM} is a recurrent equation that can be solved for all $i=1,\ldots,M$ sequentially starting with $i = 1$ subject to some boundary conditions.

\subsection{Boundary conditions}

In many financial models where dynamics of the stock price is represented by a geometric Brownian motion (perhaps with local or stochastic volatility), for instance, the celebrating Black-Scholes model, the boundary  condition at $K \to \infty$ is set to be
\[ P(S,T_i,K) \to \disD_i K - \disQ_i S, \quad K \to \infty, \]
\noindent where $\disD_i = e^{-r T_i}$ is the discount factor, and $\calQ_i = e^{-q T_i}$. Indeed, as it could be easily checked this condition is a valid solution of the Dupire forward equation \eqref{Dup1}, and also reflects the fact that at $K \to \infty$ the Put option price should be linear in $K$. However, this boundary condition doesn't solve \eqref{finDupPut}, so it could not be used in our model.

Therefore, we propose to setup the boundary condition at $K \to \infty$ by still assuming it to be a linear function of $K$ of the form
\begin{equation} \label{bcGen}
\lim_{K \to \infty} P(S,T,K) = A(T) K - B(T) S,
\end{equation}
\noindent where $A(T), B(T)$ are some functions of maturity $T$ to be determined, so the expression in \eqref{bcGen} solves \eqref{finDupPut}.

Obviously, $T_0=0$ implies $A(T_0) = B(T_0) = 1$. Then we can proceed recursively. For the next given maturity $T = T_1$ plugging  in \eqref{bcGen} into \eqref{finDupPutM} we obtain at $K \to \infty$
\begin{align}
-(r-q)K A(T_1)p_1 &-(p_1 q + 1)( A(T_1) K - B(T_1) S) = -P(S,T_0,K), \\
P(S,T_0,K) &= A(T_0) K - B(T_0) S = K - S, \nonumber \\
p_j &= X(T_j) - X(T_{j-1}) > 0. \nonumber
\end{align}
From these equations we obtain
\begin{equation} \label{rec1}
B(T_1) = \frac{1}{p_1 q +1}, \qquad A(T_1) = \frac{1}{p_1 r +1}.
\end{equation}
So in this case $A(T_1), B(T_1)$ are an analog of some kind of discrete compounding.

Proceeding recursively, we derive a general relationship
\begin{align} \label{recGen}
B(T_i) &= \frac{B(T_{i-1})}{p_i q + 1} = \frac{1}{\prod_{k=1}^i (p_i q + 1)}, \\
A(T_i) &= \frac{A(T_{i-1})}{p_i r +1} = \frac{1}{\prod_{k=1}^i (p_i r + 1)}, \qquad i=1,\ldots,M. \nonumber
\end{align}

Therefore, in our model the natural boundary conditions for the Put option price are
\begin{equation} \label{bcP}
\begin{cases}
P(S,T_i,K) = 0, & K \to 0, \\
P(S,T_i,K) = A(T_i) K - B(T_i) S \approx A(T_i) K, & K \to \infty, \\
\end{cases}
\end{equation}
\noindent

A similar equation can be obtained for the Call option prices, which reads
\begin{equation} \label{finDupCallM}
\left[\frac{1}{2} \sigma^2(K) K^2 \nabla_K^2 + (r-q) K \nabla_K - \left(q + \dfrac{1}{X(T_i) - X(T_{i-1}}\right) \right] C(S,T_i,K) =
- \dfrac{C(S,T_{i-1},K)}{X(T_i) - X(T_{i-1})}.
\end{equation}
\noindent subject to the boundary conditions
\begin{equation} \label{bcC}
\begin{cases}
C(S,T_i,K) = B(T_i) S, & K \to 0, \\
C(S,T_i,K) = 0, & K \to \infty. \\
\end{cases}
\end{equation}

\section{Piecewise models of local variance/volatility} \label{cases}

To calibrate the local volatility surface by solving \eqref{finDupPutM} we need to make further assumptions about the shape of the local volatility surface. To recall, we assume this surface to be piecewise constant in time. In the strike space \cite{CarrNadtochiy2017} considered it to be a piecewise constant, while in \cite{ELVG} a piecewise linear local variance in the strike space was considered. As shown in \cite{ELVG} in that cases \eqref{finDupPutM} can be solved in closed form.

In this paper we want to extend a class of local volatility models that allow a closed form solution.  To proceed, we start by doing a change of the dependent variable from $P(S, T_j, K)$ to
\begin{equation} \label{P2V}
V (S, T_j, K) = P(S, T_j, K) - [A(T_j)K - B(T_j)S]^+,
\end{equation}
\noindent where $V$ is known as a {\it covered} Put. This definition of $V$ allows re-writing \eqref{finDupPutM} in a more elegant form
\begin{align} \label{ode1}
&- v_j(x) x^2 V_{x,x}(x) + b_{1,j} x V_x(x) + b_{0,j} V(x) = c_j(x), \\
b_{1,j} &= p_j(r - q), \quad b_{0,j} = p_j q + 1,
\quad c_j(x) = V(S,T_{j-1}, x), \quad v_j(x) = p_j \sigma^2(x)/2, \nonumber
\end{align}
\noindent where $V(x) = V(S, T_j, x)$ and $x= K/S$ is the inverse moneyness.

Accordingly, based on the definition of $V(x)$ and \eqref{bcP}, the boundary conditions to \eqref{ode1} become homogeneous
\begin{equation} \label{bc}
\begin{cases}
V(x) = 0, & x \to 0, \\
V(x) = 0, & x \to \infty. \\
\end{cases}
\end{equation}

In the next sections we consider several popular approximations of the local volatility surface in the strike space. Each approximation assumes some functional form of the local volatility curve in the strike space, which is a strip of the volatility surface given time to maturity $T$. Thus, parameters of these approximations change with time. Also further on for the sake of certainty we assume that $r > q > 0$, but this assumption could be easily relaxed.

\subsection{Local variance piecewise linear in a log-strike space} \label{locVarLog}

Suppose that for each maturity $T_j, \ j \in [1, M]$ the market quotes are provided for a set of strikes $K_i, i = 1,\ldots,n_j$ where these strikes are assumed to be sorted in the increasing order. Then the corresponding continuous piecewise linear local variance function $\sigma^2_j(\chi)$ at the interval $[\chi_i, \chi_{i+1}], \ \chi = \log K_i/S,$ reads
\begin{equation} \label{linSigChi}
v_{j,i}(\chi) = v^0_{j,i} + v^1_{j,i} \chi.
\end{equation}
\noindent Here we use the super-index 0 to denote a level $v^0$ , and the super-index 1 to denote a slope $v^1$. Subindex $i = 0$ in $v^0_{j,0}, v^1_{j,0}$ corresponds to the interval $(0, \chi_1]$. Since $v_j(\chi)$ is a continuous function in $\chi$, we have
\begin{equation} \label{contChi}
v^0_{j,i} + v^1_{j,i} \chi_{i+1} = v^0_{j,i+1} + v^1_{j,i+1} \chi_{i+1}, \qquad i=0,\ldots,n_j-1.
\end{equation}
This means that the first derivative of $v_j(\chi)$ experiences a jump at points $\chi_i, \ i \in Z \cap [1, n_j]$. As we assumed that $v(\chi, T)$ is a piecewise constant function of time, $v^0_{j,i}, v^1_{j,i}$ do not depend on $T$ at the intervals $[T_j, T_{j+1}), \ j \in [0, M-1]$, and jump to the new values at the points $T_j, \ j \in Z \cap [1, M]$.

A simple analysis shows that under this assumption by making a change of variables $x \mapsto \chi$, \eqref{ode1} could be transformed to
\begin{equation} \label{odeLog}
- v(\chi) V_{\chi,\chi}(\chi) + (b_{1} + v(\chi)) V_\chi(\chi) + b_{0} V(\chi) = c(\chi),
\end{equation}
\noindent where for simplicity of notation we dropped index $j$.

This equation has the same type as that considered in \cite{ItkinLipton2017}, Section 2, and its solution could also be expressed in terms of confluent Hypergeometric functions, see \cite{PolyaninSaitsevODE2003}
\begin{align} \label{solInhom}
V(\chi) &= C_1 y_1(\chi) + C_2 y_2(\chi) + I_{12}(\chi) \\
I_{12}(\chi) &= y_2(\chi) \int \dfrac{ y_1(\chi) c(\chi)}{(b_2 + a_2 \chi)W}d \chi - y_1(\chi) \int  \dfrac{ y_2 c(\chi)}{(b_2 + a_2 \chi)W}d \chi, \nonumber
\end{align}
\noindent where $W = y_1 (y_2)_\chi - y_2 (y_1)_\chi$ is the so-called Wronskian of the fundamental solutions $y_1,y_2$. Thus, the problem is reduced to finding suitable fundamental solutions of the homogeneous version of \eqref{solInhom}. Based on \cite{PolyaninSaitsevODE2003}, if $a_2 \ne 0$ and $a_0 \ne 0$, the general solution reads
\begin{align} \label{homog}
V(\chi) &= (a_2 z)^{\beta_1-1} \mathcal{J}\left(\alpha_1, \beta_1, z\right), \\
z &= \chi + \dfrac{b_2}{a_2}, \quad \alpha_1 = 1 + \dfrac{b_0 + b_1}{a_2}, \quad \beta_1 = 2 + \dfrac{b_1}{a_2}. \nonumber
\end{align}
Here $\mathcal{J}(a,b,z)$ is an arbitrary solution of the degenerate Hypergeometric equation, i.e., Kummer'’s function, \cite{as64}. Two types of Kummer'’s functions are known, namely $M(a,b,z)$ and $U(a,b,z)$, which are Kummer’s functions of the first and second kind \footnote{Due to the linearity of the degenerate Hypergeometric equation any linear combination of Kummer's functions also solves this equation.}.

Accordingly, the approach of \cite{ItkinLipton2017} can be directly applied to obtain a closed form solution of \eqref{solInhom}. In particular, in the vicinity of the origin the numerically satisfactory pair is, \cite{Olver1997}
\begin{align} \label{Kummer0}
y_1(\chi) &= (a_2 z)^{\beta_1-1} M\left(\alpha_1, \beta_1, z\right), \\
y_2(\chi) &= (a_2)^{\beta_1-1} M\left(\alpha_1 - \beta_1 +1, 2-\beta_1, z\right). \nonumber \\
W &= a_2^{2 \beta_1-2} e^{z} z^{\beta_1-2} \sin(\pi \beta_1)/\pi. \nonumber
\end{align}
However, in the vicinity of infinity the numerically satisfactory pair is, \cite{Olver1997}
\begin{align} \label{Kummer1}
y_1(\chi) &=   (a_2 z)^{\beta-1} U\left(\alpha_1, \beta_1, z\right), \\
y_2(\chi) &= e^z (a_2 z)^{\beta-1} U\left(\beta_1 - \alpha_1, \beta_1, -z\right). \nonumber \\
W &= (-1)^{\alpha_1 - \beta_1} a_2^{2 \beta_1-2} e^{z} z^{\beta_1-2}. \nonumber
\end{align}

\subsection{Local variance piecewise linear in the strike space} \label{locvar}

Another tractable model is where the local variance is piecewise linear in the strike space. In particular, this is the model we used in \cite{ELVG}.

Similar to the previous section, the corresponding continuous piecewise linear local variance function $v_j(x)$ at the interval $[x_i, x_{i+1}]$ reads
\begin{equation} \label{linSig}
v_{j,i}(x) = v^0_{j,i} + v^1_{j,i} x,
\end{equation}
\noindent where, however, it is now a function of $x$ rather than $\chi$. Since
$v_j(x)$ is a continuous function in $x$, we have
\begin{equation} \label{cont1}
v^0_{j,i} + v^1_{j,i} x_{i+1} = v^0_{j,i+1} + v^1_{j,i+1} x_{i+1}, \qquad i=0,\ldots,n_j-1.
\end{equation}
This means that the first derivative of $v_j(x)$ experiences a jump at points $x_i, \ i \in Z \cap [1, n_j]$. As we assumed that $v(x, T)$ is a piecewise constant function of time, $v^0_{j,i}, v^1_{j,i}$ don't depend on $T$ at the intervals $[T_j, T_{j+1}), \ j \in 0, M-1]$, and jump to the new values at the points $T_j, \ j \in Z \cap [1, M]$.

The \eqref{ode1} can be solved by induction. One starts with $T_0 = 0$, and at each time interval $[T_{j-1}, T_j], \ j \in Z \cap [1, M]$ solves the problem \eqref{ode1} for $V(x)$, and then obtains $P(S, T_j,x)$ from \eqref{P2V}. Accordingly, the solution of \eqref{ode1} can be constructed separately for each interval $[x_{i-1}, x_i]$.

Substituting the representation \eqref{linSig} into \eqref{ode1}, for the $i$-th spatial interval we obtain
\begin{align} \label{ode2}
- (b_2 + a_2 x) x^2 V_{x,x}(x) &+ b_1 x V_x(x) + b_{0,j} V(x) = c(x), \\
b_2 &= v_{j,i}^0, \quad a_2 = v_{j,i}^1. \nonumber
\end{align}
Again, \eqref{ode2} is an {\it inhomogeneous} ordinary differential equation, and its solution can be represented in the form of \eqref{solInhom} with
\begin{equation} \label{I12v}
I_{12}(x) = -y_2(x) \int \dfrac{ y_1(x) c(x)} {(b_2 + a_2 x)x^2 W(x)}d x + y_1(x) \int  \dfrac{ y_2(x) c(x)}{(b_2 + a_2 x) x^2 W(x)}d x \equiv J_1 + J_2.
\end{equation}

The corresponding homogeneous equation can be solved as follows. First, if $b_2 \ne 0$ we make a change of independent variable $x \mapsto z = - a_2 x/b_2$. As the result the homogeneous \eqref{ode2} takes the form
\begin{equation} \label{ode2hom}
b_2 (z-1) z V_{z,z}(z) + b_1 z V_z(z) + b_{0} V(z) = 0. \\
\end{equation}
Then we make a change of the dependent variable $V(z) \mapsto z^m G(z)$ with $m$ being some constant for the given time slice. This leads to the equation
\begin{align} \label{ode2hom2}
 z^m [ \gamma &+ b_2 (m-1) m z ] G(z) + z^{m+1}[b_1 + 2 b_2 m (z-1)] G'(z)  + b_2 (z-1) z^{m+2} G''(z) = 0, \\
 \gamma &= b_0 + m(b_2 + b_1 - b_2 m). \nonumber
\end{align}
Next we solve for $m$ which makes $\gamma$ vanishing, to obtain
\begin{equation} \label{m}
m^{\pm} = \dfrac{b_2 + b_1 \pm \sqrt{4 b_2 b_0 + (b_2 + b_1)^2}}{2 b_2.}
\end{equation}
It is worth mentioning that if the determinant  $D$ in this expression is negative, both $m^+, \ m^-$ become complex.  However, this is not a problem for the solution as coefficients $C_1,C_2$ in \eqref{solInhom} could be complex as well, and such that the Put price is real.

Substituting this into \eqref{ode2hom2} and rearranging we obtain
\begin{equation} \label{hyper}
-m(m-1)G(z) + \left(2 m - \frac{b_1}{b_2} - 2m z\right) G'(z) + z(1-z)G''(z) = 0, \quad m \in [m^+, m^-],
\end{equation}
\noindent which is a Hypergeometric equation. As $m$ can take two values, we need to choose the right one such that the final solution would obey the boundary conditions.

Combining all the above steps together, the solution of \eqref{ode2hom} could be written as
\begin{align} \label{homog2}
y_1(x) &= z^{m} \left[\pFq{2}{1}{m-1, m, c; z} \right], \\
y_2(x) &= z^{m} \left[z^{1-c} \pFq{2}{1}{m - c, m +1 - c, 2 - c; z} \right],
\nonumber \\
m &= m^+, \quad c =  2m - \frac{b_1}{b_2}, \quad z = - \frac{a_2}{b_2} x, \nonumber
\end{align}
Here $\pFq{2}{1}{a, b, c; z}$ is the ordinary Hypergeometric function, \cite{Olver1997}.  It has regular singularities at $z=0, 1, \infty$. In terms of the solution in \eqref{homog}, these  singularities correspond to $K = 0, \ v = 0$ and $K \to \infty$. We will show below that at $K \to \infty$ the coefficient $a_2$ for this interval is usually positive, so the variance is positive. However, the sign of $b_2$ could be both plus and minus. Therefore,
if $b_2 > 0$ at this interval, we have $x \to \infty, \ z \to -\infty$. If $b_2 < 0$ at this interval, we have $x \to \infty, \ z \to \infty$.

When none of $c, c -a -b, a - b$ is an integer, we have a pair of fundamental solutions $f_1(x), f_2(x)$ that in \eqref{homog} are represented by expressions in square brackets. It is known that this pair is numerically satisfactory, \cite{Olver1997} aside of singularities at $z = 1$ and $z \to \infty$. Wronskian of these fundamental solutions $W(f_1(x), f_2(x))$ is
\begin{equation*}
W(f_1(x), f_2(x)) = (1-c) z^{-c}(1-z)^{c-2m}, \quad z = - a_2 x/b_2.
\end{equation*}
Accordingly,
\begin{equation} \label{wron1}
W(y_1(x), y_2(x)) = - \frac{a_2(1-c)}{b_2}z^{2m - c}(1-z)^{c-2m}, \quad z = - a_2 x/b_2.
\end{equation}
In the vicinity of {\bf singularity at $\boldsymbol{z = 1}$} this pair, however, is not numerically satisfactory. Then we have to use another solution of \eqref{hyper} which is, \cite{Olver1997}
\begin{align} \label{homog3}
y_1(x) &= z^{m} \left[\pFq{2}{1}{m-1, m, 2m-c; 1-z} \right], \\
y_2(x) &= z^{m} \left[(1-z)^{c - 2m + 1}\pFq{2}{1}{c-m+1, c-m, c - 2m +2; 1-z} \right], \nonumber \\
W(y_1(x), y_2(x)) &= -\frac{a_2(2m-1-c)}{b_2}(1-z)^{c-2m}z^{2m - c}, \quad z = - a_2 x/b_2. \nonumber
\end{align}
The numerically satisfactory fundamental solutions in the vicinity of {\bf singularity at $\boldsymbol{z = \infty}$} is described in \ref{App1}.

However, we cannot use this solution at $z \to \infty$ as well as to use the solution in \eqref{homog2} at $z \to 0$. This is caused by the Roger Lee's moment matching formula, \cite{Lee2004} which states that in the wings the implied variance surface should be at most linear in the normalized strike (or log-strike). It is also shown in  \cite{Friz2013,Friz2015}, that the asymptotic behavior of the local variance is linear in the log strike at both $K \to \infty$ and $K \to 0$. While the result for $K \to 0$ is shown to be true at least for the Heston and Stein-Stein models, the result for $K \to \infty$ directly follows from Lee's moment formula for the implied variance $v_I$ and the representation of $\sigma^2$ via the total implied variance $w = v_I T$, \cite{Lipton2001, Gatheral2006}
 \begin{equation} \label{lv}
w_L \equiv \sigma^2(T, K) T = \dfrac{T \dd_T w }{\left(1-\frac{X\dd_X w }{2 w }\right)^2
- \frac{(\dd_X w)^2}{4}\left(\frac{1}{ w }+\frac{1}{4}\right)+\frac{\dd^2_X w }{2}},
\end{equation}
\noindent where $w = w(X,T), X = \log K/F$ and $F = S e^{(r-q)T}$ is the stock forward price.

Thus, the considered model of the local variance linear in strike is not applicable at the first $0 \le x \le  x_1$ and the last $x_{n_j} < x < \infty$ strike intervals for every smile $T = T_j$ as it violets Lee's formula. Therefore, at these two intervals we use the model discussed in Section~\ref{locVarLog} where the local variance is linear in the log-strike.

It is interesting to mention, that in \cite{ItkinLipton2017, ELVG} and in section~\ref{locVarLog} the closed form solution was obtained in terms of Kummer's functions. Here the solution is expressed via Hypergeometric functions $\pFq{2}{1}{a,b,c;x}$.

As two solutions $y_1(x)$ and $y_2(x)$ are independent, \eqref{solInhom} is a general solution of \eqref{ode2}. Two constants $C_1,C_2$ should be determined based on the boundary conditions for the function $y(x)$.

The boundary conditions for the ODE \eqref{ode2} in the $x$ space at zero and infinity are given in \eqref{bc}, i.e. they are homogeneous. Based on the usual shape of the local variance curve and its positivity, for $x \to 0$, we expect that $v^1_{j,i} < 0$.  Similarly, for $x \to \infty$ we expect that $v^1_{j,i} > 0$. In between these two limits the local variance curve for a given maturity $T_j$ is assumed to be continuous, but the slope of the curve could be both positive and negative, see, e.g., \cite{ItkinSigmoid2015} and references therein.

\subsection{Local volatility piecewise linear in the strike space}

Another popular model is where the local volatility is assumed to be piecewise linear in the strike space. This model previously was frequently considered in the literature, e.g., \cite{HW2015,KC2017}. Below we show that with this assumption our model remains tractable, and a closed form solution can be obtained  by using the same approach as elaborated on in \cite{ItkinLipton2017, ELVG}.

Accordingly, the corresponding continuous piecewise linear local volatility function $\sigma_j(x)$ on the interval $[x_i, x_{i+1}]$ reads
\begin{equation} \label{linSig1}
\sigma_{j,i}(x) = \sigma^0_{j,i} + \sigma^1_{j,i} x,
\end{equation}
Since $\sigma_j(x)$ is a continuous function in $x$, we have
\begin{equation} \label{contSig}
\sigma^0_{j,i} + \sigma^1_{j,i} x_{i+1} = \sigma^0_{j,i+1} + \sigma^1_{j,i+1} x_{i+1}, \qquad i=0,\ldots,n_j-1.
\end{equation}
Again, this means that the first derivative of $\sigma_j(x)$ experiences a jump at points $x_i, \ i \in Z \cap [1, n_j]$. As $\sigma(x, T)$ is a piecewise constant function of time, $\sigma^0_{j,i}, \sigma^1_{j,i}$ do not depend on $T$ at the intervals $[T_j, T_{j+1}), \ j \in 0, M-1]$, and jump to the new values at the points $T_j, \ j \in Z \cap [1, M]$.

Substituting the representation \eqref{linSig1} into \eqref{ode1}, for the $i$-th spatial interval we obtain
\begin{align} \label{ode3}
- (b_2 + a_2 x)^2 x^2 V_{x,x}(x) &+ b_1 x V_x(x) + b_{0,j} V(x) = c(x), \\
b_2 &= \sigma_{j,i}^0, \quad a_2 = \sigma_{j,i}^1. \nonumber
\end{align}

Again, \eqref{ode3} is an {\it inhomogeneous} ordinary differential equation, and its solution can be represented in the form of \eqref{solInhom} with
\begin{equation*} \label{I12sigma}
I_{12}(x) = -y_2(x) \int \dfrac{ y_1(x) c(x)} {(b_2 + a_2 x)^2 x^2 W(x)}d x + y_1(x) \int  \dfrac{ y_2(x) c(x)}{(b_2 + a_2 x)^2 x^2 W(x)}d x \equiv L_1 + L_2.
\end{equation*}

The corresponding homogeneous equation can be solved as follows. First, if $b_2 \ne 0, b_2 + a_2 x \ne 0$ we make a change of independent variable $x \mapsto z = a_2 b_1 x/[b_2^2 (b_2 + a_2 x)]$. As the result the homogeneous \eqref{ode3} takes the form
\begin{equation*} \label{ode2homVol}
b_2 z^2 (-b_1 + b_2^2 z) V_{z,z}(z) +  z \left[2 b_2^4 z + \left(b_1 - b_2^2 z \right)^2\right] V_z(z) + b_0(b_1 - b_2^2 z) V(z) = 0.
\end{equation*}
Next we make a change of the dependent variable
\[ V(z) \mapsto z^{k_1} \left(\frac{z}{b_2^2 z + b_1}\right)^{k_2} G(z) \]
with $k_1, k_2$ being some constants for the given time slice. This leads to the equation
\begin{align} \label{G}
0 &= - b_2^2 z \left(b_1 - b_2^2 z \right)^2 G''(z) + f_1(z) G'(z) + f_0(z) G(z), \\
f_1(z) &= z \left(b_1 - b_2^2 z\right) \left[b_2^4 z (2 k_1 + z +2) - 2 b_2^2 b_1 (k_1 + k_2 + z) + b_1^2\right], \nonumber \\
f_0(z) &= q_0 + q_1 z + q_2 z^2 - b_2^6 k_1 z^3, \nonumber \\
q_2 &=  b_2^4 \left[b_0 - b_2^2 k_1 (k_1+1) + b_1 (3 k_1 + k_2)\right], \nonumber \\
q_1 &= b_2^2 {b_1} \left[2 b_2^2 {k_1} ({k_1}+{k_2})-2 {b_0}-{b_1} (3 {k_1}+2 {k_2})\right], \nonumber \\
q_0 &= {b_1}^2 \left[{b_0}-({k_1}+{k_2}) \left(b_2^2 ({k_1}+{k_2}-1)-{b_1}\right)\right]. \nonumber
\end{align}
We now request that $f_0(z)$ is proportional to  $z \left(b_1 - b_2^2 z \right)^2$ with some constant multiplier $q$, i.e.
\begin{equation*}
f_0(z) = q z \left(b_1 - b_2^2 z \right)^2.
\end{equation*}
Solving this equation term by term in powers of $z$, we obtain
\begin{align*}
k_1 &= -\frac{q}{b_2^2}, \quad {k_2}=\dfrac{q ({b_1} + q) - b_2^2 ({b_0}+q)}{b_2^2 {b_1}}, \quad q = \dfrac{1}{2} \left(b_2^2 - b_1 \pm \sqrt{b_2^4 + 2 b_2^2 (2 {b_0} + {b_1}) + {b_1}^2}  \right).
\end{align*}
Accordingly, substituting these definitions into \eqref{G} one finds
\begin{align*}
0 &= z G''(z) + (b + z) G'(z) - a G(z), \\
b &= 2 - \frac{b_1 + 2 q}{b_2^2}, \quad a = \frac{q}{b_2^2}. \nonumber
\end{align*}
This is a sort of Kummer equation which has two independent solutions, \cite{PolyaninSaitsevODE2003}
\begin{equation}
G(z) = e^{-z} U(a+b,b,z), \quad G(z) = e^{-z} M(a+b,b,z).
\end{equation}
Accordingly, as $q$ can take two values corresponding to the plus and minus sign, we have four fundamental solutions of the original equation \eqref{G}.

Similar to the previous section, we cannot use these solutions at
the first $0 \le x \le  x_1$ and the last $x_{n_j} < x < \infty$ strike intervals for every smile $T = T_j$ as it violets Lee's formula. Therefore, at these two intervals we use the model discussed in Section~\ref{locVarLog} where the local variance is linear in the log-strike. Accordingly, the local volatility is a square root of the local variance.

\section{Computation of the source term} \label{SolutionInt}

Computation of the source term $p I_{12}$ in \eqref{solInhom} could be achieved in several ways. The most straightforward one is to use numerical integration since the Put price $P(x,T_{i-1})$ as a function of $x$ is already known when we solve \eqref{solInhom} for $T=T_i$. We underline that this is not the case in \cite{ItkinLipton2017}, because there the function $P(x,T_{i-1})$ is obtained by using an inverse Laplace transform, and as such is known only for a discrete set of strikes at the previous time level. Therefore, some kind of interpolation is necessary to find the local variance at all strikes when doing integration. Moreover, this interpolation must preserve no-arbitrage, see \cite{ItkinLipton2017}.

On the other hand, using no-arbitrage interpolation provides another advantage,
as it makes it possible to compute the source term integrals in closed form if the interpolating function is wisely chosen.
Here we want to exploit the same idea, thus significantly improving computational performance of our model as compared with the numerical integration.

Below as an example consider the case of the local variance piecewise linear in the strike space. Then based on solutions found in Section~\ref{locvar} in \eqref{homog2} we have
\begin{align} \label{J1}
J_1(x) &= -y_2(x) \int \dfrac{ y_1(x) c(x)} {(b_2 + a_2 x)x^2 W(x)}d x =
-y_2(x) \frac{a_2^2}{b_2^3} \int \dfrac{ y_1(z) c(z)} {(1-z) z^2 W(z)}d z, \\
y_1(z) &= z^{m} \pFq{2}{1}{m-1, m, c; z}, \quad
c(z) = V(S,T_{j-1}, z), \quad z = - a_2 x/b_2,
\nonumber
\end{align}
\noindent where $W(z)$ is defined in \eqref{wron1}.

Following the idea of \cite{ItkinLipton2017}, in \cite{ELVG}) we introduced a non-linear interpolation
\begin{align} \label{linNew}
P(x) &= \gamma_0 + \gamma_2 x^2, \quad x_1 \le x \le x_3, \\
\gamma_0 &= \dfrac{P(x_3) x_1^2   - P(x_1) x^2_3}{x_1^2 - x_3^2}, \qquad
\gamma_2 = \dfrac{P(x_1) - P(x_3)}{x_1^2 - x_3^2}. \nonumber
\end{align}
Then Proposition~6.1 in \cite{ELVG}) proves that this interpolation scheme is arbitrage-free.

It is worth emphasizing that the proposed interpolation doesn't affect the solution values (quotes) at given market strikes since the piecewise interpolator is constructed to exactly match those values. So the interpolation only affects the Put values that are not known, i.e., those with strikes that lie in between the given market strikes. Therefore, if these strikes are not used, i.e. in trading or hedging, the influence of the interpolation is unobservable at all.  If, however, they are used for some purpose, the difference with the exact solution is small (within the error of interpolation), while the approximate solution for these strikes yet preserves no-arbitrage.

Recall, that  we introduced $V(x)$ using \eqref{P2V}. Accordingly, the term $c(z)$ in \eqref{J1} takes the form (see \ref{App4} and \eqref{intZ})
\begin{equation} \label{cx}
c(z) = V(S,T_{j-1}, z) = \bar{\gamma}_0 + \gamma_1 z + \bar{\gamma}_2 z^2.
\end{equation}
It turns out that now the integral in \eqref{J1} can be computed in closed form. Indeed
\begin{align} \label{stI}
\int & \dfrac{ y_1(z) c(z)} {(1-z) z^2 W(z)}d z = I_0 + I_1 + I_2, \\
I_0 &= \gamma_0 \int \dfrac{ y_1(z)}{(1-z) z^2 W(z)} d z = \bar{\gamma}_0 A(z)
\dfrac{1}{\Gamma(c) (c - m - 1)} \pFq{2}{1}{c-m-1,c-m+1,c,z},
\nonumber \\
I_1 &= \gamma_1 \int \dfrac{ z y_1(z)}{(1-z) z^2 W(z)} d z = \gamma_1 z A_(z) \dfrac{1}{\Gamma(c)(c-m)} \pFq{2}{1}{c-m,c-m,c,z}
\nonumber \\
I_2 &= \bar{\gamma}_2 \int \dfrac{ z^2 y_1(z)} {(1-z) z^2 W(z)}d z = \bar{\gamma}_2 A(z)
z^2 \dfrac{1}{(c - m + 1)\Gamma(c)} \dopFq{3}{2}{c - m,c - m + 1,c - m + 1}{c,2 + c - m}{z}, \nonumber \\
A(z) &= \frac{b_2}{a_2}\Gamma (c-1) z^{c - m -1}, \nonumber
\end{align}
\noindent where $\dopFq{3}{2}{a_1,a_2,a_3}{b_1,b_2}{z}$ is a generalized Hypergeometric function (\cite{Askey2010}).

The second integral in the definition of $J_2$
\begin{align} \label{J2}
J_2(x) &= y_1(x) \int \dfrac{ y_2(x) c(x)} {(b_2 + a_2 x)x^2 W(x)}d x =
y_1(x) \frac{a_2^2}{b_2^3} \int \dfrac{ y_2(z) c(z)} {(1-z) z^2 W(z)}d z, \\
y_2(z) &= z^{m+1-c} \pFq{2}{1}{m - c, m +1 - c, 2 - c; z}, \nonumber
\end{align}
\noindent could be computed in a similar way. The result reads

\begin{align}
\int & \dfrac{ y_2(z) c(z)} {(1-z) z^2 W(z)}d z = {\cal I}_0 + {\cal I}_1 + {\cal I}_2, \\
{\cal I}_0 &= \gamma_0 \int \dfrac{ y_2(z)}{(1-z) z^2 W(z)} d z = \bar{\gamma}_0 A(z)
\frac{1}{m} \pFq{2}{1}{2-m,-m,2-c,z}, \nonumber \\
{\cal I}_1 &= \gamma_1 \int \dfrac{ z y_2(z)} {(1-z) z^2 W(z)}d z =
\gamma_1 A(z) z \frac{1}{(m-1)}\pFq{2}{1}{1-m,1-m,2-c,z}, \nonumber \\
{\cal I}_2 &= \bar{\gamma}_2 \int \dfrac{ z^2 y_2(z)} {(1-z) z^2 W(z)}d z = \bar{\gamma}_2 A(z) z^2 \dfrac{1}{(m-2)} \dopFq{3}{2}{1-m,2-m,2-m}{2-c,3-m}{z}, \nonumber \\
A(z) &= \frac{b_2}{a_2}\frac{\Gamma (1-c)}{\Gamma(2-c)} z^{-m}, \nonumber
\end{align}

Two special cases are the first $ 0 \le x \le x_1$ and the last $x_{n_j} < x < \infty$ intervals where the solution is given by \eqref{Kummer0} and \eqref{Kummer1}.

\subsection{Last interval $x_{n_j} \le x < \infty$.}

Since the right edge of this interval lies at infinity, the interpolation scheme in \eqref{linNew} should be slightly modified. This could be done twofold. The first option is to move the boundary from infinity to any very large but finite positive strike. Then the scheme in \eqref{linNew} could be used with no problem. But in our case it turns out that we are not able to compute these integrals in closed form. Therefore, we use another option which consists in replacing the quadratic form in \eqref{linNew} with another nonlinear interpolation
\begin{equation} \label{int2}
c(\chi) = V(\chi, T_{j-1}, S) = \gamma_\infty z^{-\nu}, \quad z = \chi + \frac{b_2}{a_2},
\end{equation}
\noindent where $\gamma_\infty > 0, \ \nu > 0$ are some constants to be determined. Obviously, at $\chi \to \infty$ this interpolation preserves the correct boundary value of $V$ as in \eqref{bc}, i.e. $V(\chi)$ vanishes in this limit. Derivation of the appropriate values of $\gamma_\infty,\ \nu$ and a proof that the proposed interpolation preserves no-arbitrage are given in \ref{App2}.

Recall that at this interval we assume the local variance to be linear in the log-strike $\chi$. Therefore,  the numerically stable pair of solutions of \eqref{solInhom} is given in \eqref{Kummer1}.  Then the integral in \eqref{solInhom} can be computed in closed form. In doing so we use the following notation from \cite{kummerInt1970}
\begin{align*}
\int e^{- \alpha z} z^\nu U(a,b,z) d z &= U_\nu(\alpha; a,b,z), \\
\int e^{- \alpha z} z^\nu M(a,b,z) d z &= M_\nu(\alpha;
a,b,z).
\end{align*}
Then
\begin{align} \label{stInf}
I_{12}(\chi) &= y_2(\chi) \int \dfrac{ y_1(\chi) c(\chi)}{(b_2 + a_2 \chi)W}d \chi - y_1(\chi) \int  \dfrac{ y_2(\chi) c(\chi)}{(b_2 + a_2 \chi)W}d \chi, \\
\int \dfrac{ y_1(\chi) c(\chi)}{(b_2 + a_2 \chi)W}d \chi &=
\xi_\infty \int e^{-z} z^{-\nu} U(\alpha_1, \beta_1, z) d z = \xi_\infty U_{-\nu}(-1; \alpha_1, \beta_1, z), \nonumber \\
\int \dfrac{ y_2(\chi) c(\chi)}{(b_2 + a_2 \chi)W}d \chi &= \xi_\infty
\int z^{-\nu} U(\beta_1-\alpha_1, \beta_1, -z) d z =
(-1)^{-\nu} \xi_\infty U_{-\nu}(0; \beta_1 - \alpha_1, \beta_1, -z), \nonumber \\
\xi_\infty &= (-1)^{\beta_1-\alpha_1} \gamma_\infty a_2^{2-\beta_1}. \nonumber
\end{align}

As per \cite{kummerInt1970},
\begin{align}
M_{\nu}(-1; a, b, z) &= e^{i \pi (\nu+1)} M_{\nu}(0; b- a, b, -z), \\
M_{\nu}(0; a, b, z) &= \frac{z^{\nu+1}}{\nu+1} \dopFq{2}{2}{\nu_1+1, a}{\nu+2, b}{z}, \quad b \ne 0,-1,-2,\ldots, \quad \nu \ne -1,-2,\ldots, \nonumber \\
M_{-1}(0; a, b, z) &= \frac{a}{b}z\ \dopFq{3}{3}{a+1, 1, 1}{b+1, 2, 3}{z} + \log(z), \nonumber \\
U_\nu(\alpha;a,b,z)  &= \frac{\pi}{\sin(\pi b)}
\left[ \frac{M_\nu(\alpha;a,b,z)}{\Gamma(1+a-b) \Gamma(b)} -
\frac{M_{\nu+1-b}(\alpha;1+a-b,2-b,z)}{\Gamma(a) \Gamma(2-b)} \right]. \nonumber
\end{align}
Therefore, all necessary integrals could be expressed in terms of generalized Hypergeometric functions. Alternatively, these integrals could be represented as
\begin{align} \label{MeijerG}
U_{-\nu}(-1; \alpha_1, \beta_1, z) &=
\MeijerG[\bigg]{2}{1}{2}{3}{1,\ 2+\alpha_1-\beta_1-\nu}
{1-\nu,\ 2-\beta_1-\nu,\ 0}{z}, \\
U_{-\nu}(0; \alpha_1, \beta_1, -z) &=
\frac{z^{1-\nu}}{\Gamma(1-\alpha_1)\Gamma(\beta_1-\alpha_1)}
\MeijerG[\bigg]{2}{2}{2}{3}{\nu,\ 1+\alpha_1-\beta_1}
{0,\ 1-\beta_1,\ \nu-1}{-z}, \nonumber
\end{align}
\noindent where $\MeijerG[\bigg]{m}{n}{p}{q}{a_1,...,a_p}{b_1,...,b_q}{z}$ is the Meijer G-function, see \cite{Olver1997}.

It is not difficult to verify that at $K \to \infty$, and so  $z \to \infty$,  the integral $I_{12}(\chi)$ vanishes.

\subsection{First interval $0 \le x \le x_1$.}

Recall that at this interval we assume the local variance to be linear in the log-strike $\chi$. Since at $K \to 0$ we have $\chi \to -\infty$, the numerically stable pair of solutions of \eqref{solInhom} is still given by \eqref{Kummer1}.

However, at this interval we need another interpolation scheme because the previously described schemes don't give rise to tractable integrals. However, this could be achieved by using, e.g., the following nonlinear interpolation
\begin{equation} \label{int3}
c(\chi) = V(\chi, T_{j-1}, S) = \omega_0 e^z/z, \quad z = \chi + \frac{b_2}{a_2},
\end{equation}
\noindent where $\omega_0 < 0$ is a constant to be determined. Obviously, at $K \to 0$, and so $z \to -\infty$,  this interpolation preserves the correct boundary value of $V$ as in \eqref{bc}, i.e. $V(\chi)$ vanishes in this limit. Derivation of the appropriate value of $\omega_0$ and a proof that the proposed interpolation preserves no-arbitrage are given in \ref{App3}.

Now the integral in \eqref{solInhom} can be computed in closed form
\begin{align} \label{stInf0}
I_{12}(\chi) &= y_2(\chi) \int \dfrac{ y_1(\chi) c(\chi)}{(b_2 + a_2 \chi)W}d \chi - y_1(\chi) \int  \dfrac{ y_2(\chi) c(\chi)}{(b_2 + a_2 \chi)W}d \chi, \\
\int \dfrac{ y_1(\chi) c(\chi)}{(b_2 + a_2 \chi)W}d \chi &=
\xi_0\int z^{-1} U(\alpha_1, \beta_1, z) d z =
\xi_0 U_{-1}(0; \alpha_1, \beta_1, z), \nonumber \\
\int \dfrac{ y_2(\chi) c(\chi)}{(b_2 + a_2 \chi)W}d \chi &= \xi_0 \int e^z z^{-1} U(\beta_1-\alpha_1, \beta_1, -z) d z = - \xi_0 U_{-1}(-1; \beta_1 - \alpha_1, \beta_1, z), \nonumber \\
\xi_0 &= (-1)^{\beta_1-\alpha_1} \omega_0 a_2^{-\beta_1}. \nonumber
\end{align}
Representation of functions $U_{-1}(-1; \beta_1 - \alpha_1, \beta_1, z), \ U_{-1}(0; \alpha_1, \beta_1, z)$ via the Meijer G-function is given in \eqref{MeijerG}. Again, it can be easily verified that at $K \to 0$, and so  $z \to -\infty$,  the integral $I_{12}(\chi)$ vanishes.

\subsection{Special case $z \approx 1$ or $|v/b_2| \ll 1$.}

This case occurs when at the interval $[K_i,K_{i+1}]$  for some $i \in [1,n_j]$  coefficients $a_2, b_2$ are such
that either $|1 - z_i| \ll 1$ or $|1 - z_{i+1}| \ll 1$. Suppose, e.g. that $z_{i+1} = 1 + \epsilon$ with $0 < \epsilon \ll 1.$ As shown in the next section, then we can introduce a ghost point $K_*$ such that $z_* = 1 - \epsilon$. So at the interval $[K_*,K_{i+1}]$ we will use the numerically stable solution in \eqref{homog3}, while at the interval $[K_i,K_*]$ - the regular solution in \eqref{homog2}. Same construction could be provided if $z_{i} = 1 - \epsilon$.

At the interval $z \in [1-\epsilon, 1+\epsilon]$ where the values of $z$ are close to singularity of the Hypergeometric function at $z=1$ there are two ways to construct the solution. First, one can build an asymptotic solution using $v/b_2$ as a small parameter, because  at $z \to 1$ we have $v/b_2 = (b_2 + a_2 x)/b_2  = 1-z \to 0$. As shown in \cite{ELVG}, this can be done, e.g., using the method of boundary functions, \cite{VasBut1995}.

Alternatively, it follows from \eqref{homog3} that $y_1(z) \to 1, \ y_2(z) \to 0$ at $z \to 1$. Therefore, these solutions have a regular behavior in the vicinity of $z = 1$. So all we need to do is to propose a suitable no-arbitrage interpolation to make computation of the source term in \eqref{I12v} tractable. This interpolation is constructed in \ref{App4}.

Thus, based on \eqref{J1} and \eqref{homog3} we need to compute 2 integrals
\begin{align} \label{Jz1}
{\cal J}_1(x) &= \int \dfrac{ y_1(z) c(z)} {(1-z) z^2 W(z)}d z, \qquad
{\cal J}_2(x) = \int \dfrac{ y_2(z) c(z)} {(1-z) z^2 W(z)}d z, \\
y_1(z) &= z^{m} \pFq{2}{1}{m-1, m,2m- c; 1-z}, \quad c(z) = V(z,T_{j-1}, S), \nonumber \\
y_2(z) &= z^{m} (1-z)^{c-2m+1} \pFq{2}{1}{c-m+1,c- m,c-2m+2; 1-z}, \nonumber \\
W(y_1(z), y_2(z)) &= \omega_1 (1-z)^{c-2m}z^{2m - c}, \quad \omega_1 = -\frac{a_2(2m-1-c)}{b_2}. \nonumber
\end{align}
The integral ${\cal J}_2(x)$ can be found in closed form, and the result reads
\begin{align} \label{Jz2exp}
{\cal J}_2(x) &= \bar{\gamma}_0 {\cal J}_{2,0}(x) + \gamma_1 {\cal J}_{2,1}(x)  + \bar{\gamma}_2 {\cal J}_{2,1}(x),  \\
{\cal J}_{2,0}(x) &=  \frac{\pi}{\omega_1}  \csc (\pi  c) z^{-m} \Gamma (c-2 m+2) \Big[ \frac{z^{c-1} \, _2{F}_1(c-m-1,c-m+1;c;z)}{(c-m-1) \Gamma(c) \Gamma (1-m) \Gamma (2-m)} \nonumber \\
&+\frac{\, _2{F}_1(2-m,-m;2-c;z)}{m \Gamma(2-c) \Gamma (c-m) \Gamma (c-m+1)}\Big], \nonumber \\
{\cal J}_{2,1}(x) &=
\frac{\pi} {(m-1) \omega_1 } \csc (\pi  c) z^{-m} \Gamma (c-2 m+2) \Big[(\frac{z (c-m) \, _2{F}_1(1-m,1-m;2-c;z)}{\Gamma(2-c)\Gamma (c-m+1)^2} \nonumber \\
&-\frac{z^c \, _2{F}_1(c-m,c-m;c;z)}{(c-m) \Gamma(c) \Gamma (1-m)^2}\Big], \nonumber \\
{\cal J}_{2,2}(x) &=
\frac{\Gamma (c-2 m+2)}{\omega_1  \Gamma (1-m) \Gamma (2-m) \Gamma (c-m) \Gamma (c-m+1)}
\MeijerG[\bigg]{2}{3}{3}{3}{1,1,2}{2-m,\ c-m+1,\ 0}{z}. \nonumber
\end{align}

The integral ${\cal J}_1(x)$ with the use of no-arbitrage interpolation defined in \eqref{intZ} reads
\begin{equation*}
{\cal J}_1(x) = \omega_1^{-1} \int (1-z)^{-c+2 m-1} z^{c-m-2} \, _2F_1(m-1,m;2 m-c;1-z)
(\bar{\gamma}_0 + \gamma_1 z + \bar{\gamma}_2 z^2) dz.
\end{equation*}
This integral can be computed as follows. We remind that $z \in [1-\epsilon, 1+\epsilon], \ |\epsilon| \ll 1$. Therefore, the term $z^k, \ k \in \mathbb{R}$ can be expanded into series around $z=1$ to obtain
\begin{equation*}
z^k = \sum_{i=0}^\infty  (-1)^i {k\choose i} (1-z)^i
\end{equation*}

Then ${\cal J}_1(x)$ takes the form
\begin{align} \label{expJ1}
{\cal J}_1(x) &= \omega_1^{-1} \Bigg\{
\bar{\gamma}_0  \sum_{i=0}^\infty  (-1)^i {c-m-2 \choose i} \int  (1-z)^{i -c+2 m-1} \, _2F_1(m-1,m;2 m-c;1-z) dz \\
&+ \gamma_1  \sum_{i=0}^\infty  (-1)^i {c-m-1 \choose i} \int  (1-z)^{i -c+2 m-1} \, _2F_1(m-1,m;2 m-c;1-z) dz \nonumber \\
&+ \bar{\gamma}_2  \sum_{i=0}^\infty  (-1)^i {c-m \choose i} \int  (1-z)^{i -c+2 m-1} \, _2F_1(m-1,m;2 m-c;1-z) dz \Bigg\} \nonumber \\
&= \omega_1^{-1} \sum_{i=0}^\infty \nu_i \int  (1-z)^{i -c+2 m-1} \, _2F_1(m-1,m;2 m-c;1-z) dz, \nonumber \\
&= \omega_1^{-1} \sum_{i=0}^\infty \frac{\nu_i}{c - i - 2m} (1-z)^{-c+i+2m} \dopFq{3}{2}{m-1,m,2m-c+i}{2m-c,2m+i-c+1}{1-z}, \nonumber \\
\nu_i &= (-1)^i \left[ \bar{\gamma}_0  {c-m-2 \choose i} +
\gamma_1   {c-m-1 \choose i}  + \bar{\gamma}_2  {c-m \choose i} \right]. \nonumber
\end{align}
The exponent $-c+i+2m = i + b_1/b_2$  is always positive if $b_2 > 0$ in the vicinity of $z=1$. According to \ref{App4}, this condition on $b_2$ is valid if  $1 - \epsilon \le z < 1$. Therefore, 2-3 terms in the expansion \eqref{expJ1} provide the sufficient accuracy in computation of the integral. However, this is also true when $1+\epsilon > z > 1$ (and so $b_2$ is negative) which implies that the entire exponent is also negative, at least at low $i$. This is because the behavior of the product $(1-z)^{i -c+2 m} \dopFq{3}{2}{m-1,m,2m-c+i}{2m-c,2m+i-c+1}{1-z}$ is regular even in this case.

In a similar manner the source terms for other models of the local variance/volatility considered in previous sections could be computed in closed form. We leave this exercise to the reader.

\section{Smile calibration for a single term.} \label{calib}

Calibration problem for the local volatility model is described in \cite{ELVG} as well as the construction of the solution for the entire smile. Here we follow the same approach, and, therefore, provide just some short comments specific to the GLVG model. Again, as an example consider the case where the local variance is a piecewise linear function of strike.
Calibration for the other cases considered in Section~\ref{cases} can be done  in a similar manner.

A general calibration problem we need to solve is: given market quotes of Call and/or Put options corresponding to various strikes $\{K\}:= K_j, \ j \in [1,N]$ and same maturity $T_i$, find the local variance function $v(x)$ such that these quotes solve equations in \eqref{finDupPutM}, \eqref{finDupCallM}.

Suppose that the Put prices for $T=T_j$ are known for $n_j$ ordered strikes. The location of those strikes on the $x$ line is schematically depicted in Fig.~\ref{Fig3}
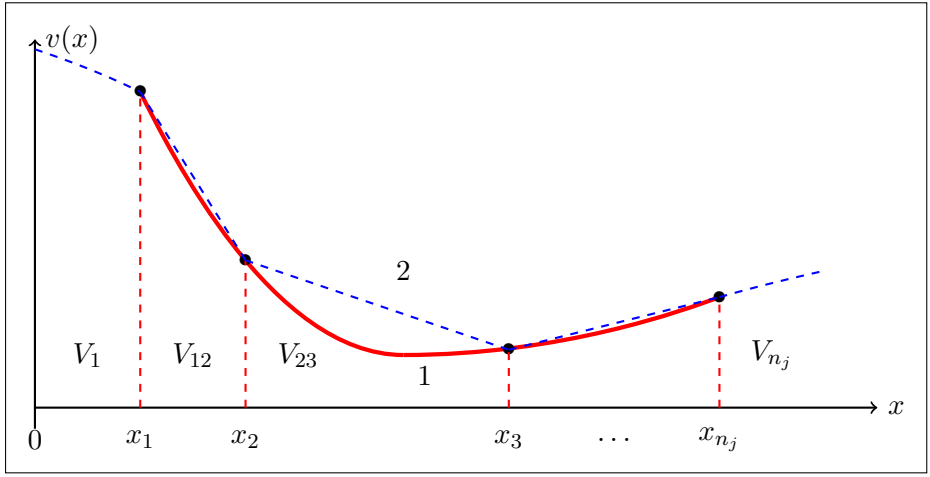
\begin{figure}[!ht]
\begin{center}
\begin{tikzpicture}[line/.style={<->},thick, framed, scale=1.4]

\draw[->] (-3.0,0) -- (5,0) node[right] {$x$};
\draw[->] (-3.0,-0.2) -- (-3.0,3.5) node[right] {$v(x)$};
\draw[red,ultra thick] (0.5,0.5) parabola (3.5,1.05);
\draw[red,ultra thick] (0.5,0.5) parabola (-2,3);
\node at (-3.0,-0.3) {$0$};
\node at (-2,-0.3) {$x_1$};
\node at (-2,3.) {$\bullet$};
\draw[red, dashed] (-2,0) -- (-2,3.);
\node at (-2.5,0.5) {$V_1$};
%
\draw[blue, dashed,domain=-3:-2,samples=20] plot[smooth](\x, {ln(\x/-0.099574)});

\node at (-1,-0.3) {$x_2$};
\node at (-1,1.4) {$\bullet$};
\draw[red, dashed] (-1,0) -- (-1,1.4);
\node at (-1.5,0.5) {$V_{12}$};
\draw[blue, dashed] (-2,3.) -- (-1,1.4);
\node at (1.5,-0.3) {$x_3$};
\node at (1.5,0.55) {$\bullet$};
\draw[red, dashed] (1.5,0) -- (1.5,0.55);
\node at (-0.5,0.5) {$V_{23}$};
\draw[blue, dashed] (-1,1.4) -- (1.5,0.55);
\node at (2.5,-0.3) {$\ldots$};
\node at (3.5,-0.3) {$x_{n_j}$};
\node at (3.5,1.05) {$\bullet$};
\draw[red, dashed] (3.5,0) -- (3.5,1.05);
\draw[blue, dashed] (1.5,0.55) -- (3.5,1.05);
\draw[blue, dashed,domain=3.5:4.5,samples=20] plot[smooth](\x, {ln(\x/1.22478)});
\node at (4,0.5) {$V_{n_j}$};
\node at (0.5,1.3) {$2$};
\node at (0.7,0.3) {$1$};

\end{tikzpicture}
\end{center}
\caption{Schematic construction of the combined solution in $x \in \mathbb{R}^+$: 1 (red solid line) - the real (unknown) local variance curve, 2 (dashed blue line) - a piecewise linear solution. At $x > x_{n_j}$ and $x < x_1$ the blue line is $b_2 + a_2 \log(x)$.}
\label{Fig3}
\end{figure}

Recall that the general form of the solution is given in \eqref{solInhom} which at every interval $x_{i-1} \le x \le x_i$ and  $T=T_j$ can be represented as
\begin{align} \label{combPut}
V(x) &= \CO{i}{j} y_1(x) + \CT{i}{j} y_2(x) + I_{12}(x).
\end{align}
Here for better readability we changed the notation of two integration constants which belong to the $i$-th interval in $x$ and $j$-th maturity to $\CO{i}{j},\CT{i}{j}$.

Similar to \cite{ELVG}, we assume continuity of the options price and its first derivative at every node $i=1,\ldots,n_j$. We also supplement this by two additional conditions: the first one is given by \eqref{contChi}, and the other one is  that at every node the solution $P(S, T_j, K_i)$ must coincide with a given market quote for the pair $(T_j, K_i)$. So together this provides four equations for
four unknown variables $v_{j,i}^0, v_{j,i}^1, \CO{i}{j}, \CT{i}{j}$:
\begin{align} \label{finSystem}
P_i(x)|_{x=x_i} &= P_{i+1}(x)|_{x = x_i}, \\
P_i(x)|_{x=x_i} &= P_{market}(x_i), \nonumber \\
\fp{P_{i+1}(x)}{x}\Big|_{x = x_i} &=  \fp{P_{i}(x)}{x}\Big|_{x = x_i}, \nonumber \\
v^0_{j,i} + v^1_{j,i} x_{i} &= v^0_{j,i+1} + v^1_{j,i+1} x_{i}, \quad i=1,\ldots,n_j. \nonumber
\end{align}
The \eqref{finSystem} is a system of $4 n_j$ nonlinear equations with respect to $4 (n_j+1)$ variables $v_{j,i}^0$, $v_{j,i}^1$, $\CO{i}{j}$, $\CT{i}{j}$. Therefore we need 4 additional conditions to unquietly solve it.

To this end observe that the constants $\CT{1}{j}, \CT{n_j}{j}$ could be determined based on the boundary conditions in  \eqref{bc}. Indeed, at $K \to 0$ function $y_2(\chi)$ in \eqref{Kummer1} vanishes (as $a_2 < 0$ at this interval), but not $y_1(x)$.  Therefore, to obey the vanishing boundary condition in \eqref{bc} we must set $\CO{1}{j} = 0$. As that was already discussed, the source term in \eqref{stInf0} also vanishes in this limit.  Therefore, the solution in \eqref{Kummer1} with the source term in \eqref{stInf0} and $\CT{1}{j} = 0$ obeys the boundary condition at $z \to 0$.

At $K \to \infty$ based on representation of the solution in \eqref{Kummer1} with $a_2 > 0$ at this interval, similarly we must set $\CT{n_j}{j} = 0$, as the solution $y_2(x)$ in \eqref{Kummer1} diverges at $z \to \infty$.

The remaining two additional conditions could be set in many different ways. Here we rely on traders intuition about the asymptotic behavior of the volatility surface at strikes close to zero and infinity. According to our construction, they are determined by $v^1_{j,0}$ and $v^1_{j,n_j}$. Therefore, we assume these coefficients to be somehow known, i.e. consider them as the given parameters of our model.

Overall, by solving the nonlinear system of equations \eqref{finSystem} we find the final solution of our problem. This can be done by using standard methods, and, thus, no any optimization procedure is necessary. However, a good initial guess still would be helpful for a better (and faster) convergence. Construction of such a guess is described in \cite{ELVG}. Also note that this system has a block-diagonal structure where each block is a 2x2 matrix. Therefore, it can be easily solved with the linear complexity $O(n_j)$.

When computing the first derivatives, we take into account that derivatives of Hypergeometric functions belong to the same class of functions, since, \cite{as64}
\begin{align*}
\fp{}{z} \pFq{2}{1}{a,b,c,z} &= \frac{a b}{c}\pFq{2}{1}{a+1,b+1,c+1,z}, \\
\fp{}{z} \dopFq{3}{2}{a,b,c}{d,e}{z} &= \frac{a b e}{c d}\dopFq{3}{2}{a+1,b+1,c+1}{d+1,e+1}{z}.
\end{align*}
Same is true for the Meijer G-function. For instance,
\begin{align}
\fp{}{z} &\MeijerG[\bigg]{2}{2}{2}{3}{\nu,\ 1+\alpha_1-\beta_1}{0,\ 1-\beta_1,\ \nu-1}{-z} =
\frac{\Gamma(1-\alpha_1) \Gamma(\beta_1-\alpha_1)}{z}
U(\beta_1 - \alpha_1, \beta_1, -z) \\
&+ (\nu-1)\MeijerG[\bigg]{2}{2}{2}{3}{\nu,\ 1+\alpha_1-\beta_1}{0,\ 1-\beta_1,\ \nu-1}{-z}. \nonumber
\end{align}

Therefore, computing derivatives of the solution does not cause any new technical problem.

\subsection{Special case $|1-z_i| \ll 1$ at some node $K_i, \ i \in [1,n_j]$.} \label{spCase1}

Without loss of generality suppose that $z_i = 1 - \epsilon$ and $z_{i+1} \gg 1 + \epsilon$ with $0 < \epsilon \ll 1.$ The other case  $z_i = 1 + \epsilon$ and $z_{i-1} \ll 1 - \epsilon$ could be treated in a similar way. Then let us introduce  a ghost point $K_*$ such that $z_* = 1 + \epsilon$. So at the interval $[K_i,K_*]$ we will use the numerically stable solution in \eqref{homog3}, while at the interval $[K_*,K_{i+1}]$ - the regular solution in \eqref{homog2}.

Since $K_*$ is the ghost point, we don't have a market quote available at $K_*$. All we can say is that yet we  assume the local variance/volatility to be a piecewise linear function of $K$ at $[K_*,K_{i+1}]$ and $[K_i,K_*]$. It has to be continuous but with a possible jump in skew at $K_*$.

Since a market quote at $K_*$ is not available, we can replace it with any reasonable value. For instance, an interpolated value between market quotes at $K_i, K_{i+1}$ could be used obtained by using no-arbitrage interpolation\footnote{Despite it looks attractive, we cannot require $v^1_{j,i} = v^1_{j,*}$ since this also gives rise to $v^0_{j,i} = v^0_{j,*}$. However, $v^0_{j,i}$ changes sign at $z = 1$.}. Then we obtain four equations for $\CO{*}{j}, \CT{*}{j, }, v^0_{j,*}, v^1_{j,*}$
\begin{align} \label{finSystem1}
P_i(x)|_{x=x_i} &= P_{*}(x)|_{x = x_i}, \\
P_*(x)|_{x=x_*} &= P_{interp}(x)|_{x = x_*}, \nonumber \\
\fp{P_{*}(x)}{x}\Big|_{x = x_i} &=  \fp{P_{i}(x)}{x}\Big|_{x = x_i}, \nonumber \\
v^0_{j,i} + v^1_{j,i} x_{*} &= v^0_{j,*} + v^1_{j,*} x_{*}, \quad i=1,\ldots,n_j. \nonumber
\end{align}
\noindent that should be added to \eqref{finSystem}. Solving this new combined linear system in the same way as we did it for \eqref{finSystem} we find the values of all unknown $\CO{i}{j}, \CT{i}{j, }, v^0_{j,i}, v^1_{j,i}$ where now $i \in \{ [1,n_j] \cup *\}$.

\section{Discussion} \label{numExp}

First, let us mention that in many practical calculations either coefficients $a_2 = v^1_{j,i}$ at some $i$, or both $b_2 = v^0_{j,i}, \ a_2 = v^1_{j,i}$  (see, for instance,  \eqref{ode2}) are small. Of course, in that case the general solution \eqref{homog2} remains valid. However, when computing the values of Hypergeometric functions numerically, the errors significantly grow in such a case. This is especially pronounced when computing the source term integral $I_{12}$. The main point is that either the Hypergeometric function takes a very small value, and then the constants $\CO{i}{j},\CT{i}{j}$ should be very large to compensate, or vice versa. Resolution of this issue requires a high-precision arithmetics, and, which is more important, taking into account many terms in a series representation of the Hypergeometric functions, which significantly slows down the total performance of the method.

To eliminate these problems we can look at asymptotic solutions of \eqref{ode2} taking into account the existence of small parameters from the very beginning. This approach was successfully elaborated on in \cite{ItkinLipton2017, ELVG}, so we don't describe it here in detail.

In \cite{ELVG} we calibrated the ELVG model, e.g. to the data set taken from \cite{Balaraman2016}. In that paper an implied volatility surface of S\&P500 is presented, and the local volatility surface is constructed using the Dupire formula. We took data for the first 12 maturities and all strikes as they are given in \cite{Balaraman2016}. Our results demonstrated high accuracy and speed of calibration.

When doing so, a technical note should be made. We mentioned already that in our model for every term the slopes of the smile at strikes close to zero, $v^1_{j,0}$ and infinity, $v^1_{j,n_j}$ are free parameters of the model. So often traders have an intuition about these values. However, in our numerical experiments we setup them just using some plausible test values. In particular, in \cite{ELVG} for the sake of simplicity for all smiles we used $v^1_{j,0} = -0.3$, and $v^1_{j,n_j} = 0.1$. Accordingly, for the instantaneous variance $v_j(x_i) = p_j (v^0_{j,i} + v^1_{j,i} \log(x_i))/2$ the slopes at both zero and plus infinity are time-dependent and can be computed by using this definition.

As a numerical solver for the system of linear equations we used the standard Matlab {\it fsolve} function, and utilized a "trust-region-dogleg" algorithm. Parameter "TypicalX" has to be chosen carefully to speedup calculations.

In this paper we repeated this test, but now using the GLVG instead of the ELVG. The results look same as in Fig.5 of \cite{ELVG}, i.e. the quality of the fit is same, and performance of the method is almost same. But the conclusion of \cite{ELVG} remains intact, namely that performance of this model is much better than that reported in both \cite{ItkinSigmoid2015} and \cite{ItkinLipton2017}.

Therefore, a natural question would be: which flavor of the Local Variance Gamma model - arithmetic or geometric one is preferable. Perhaps, if the ultimate goal is fast calibration of the given smile, both could be used interchangeably, and both are capable to provide a good and fast fit. However, for modeling option prices the difference between the geometric and arithmetic LVG models is of the same kind as between the Bachelier and Black-Scholes models. So, for instance, for modeling stock prices the latter would be preferable, while for modeling interest rates the former could provide negative values, which nowadays is a desirable feature.

\section{Conclusions}

In this paper we propose another flavor of the Local Variance Gamma. Several contributions are made as compared with the existing literature. First, the model is constructed based on a Gamma time-changed {\it geometric} Brownian motion with drift, while in all previous papers an arithmetic Brownian motion was used.

Second, we consider 2 models of the local variance - piecewise linear in strike, piecewise linear in the log-strike, and the model of the local volatility piecewise linear in strike (which is new in this context). We also consider a combined model of the local variance which is piecewise linear in strike in the internal intervals, and linear in the log-strike at the first and last intervals (see below in more detail).

Third, we show that for all these new constructions still it is possible to derive an ordinary differential equation for the option price, which plays a role of Dupire's equation for the standard local volatility model. Moreover, it can be solved in closed form in terms of various flavors of Hypergeometric functions. For doing so we propose several new versions of no-arbitrage interpolation, similar to how this was done in \cite{ELVG} but in a slightly different form, so the eintire approach becomes tractable.

Also we shortly discuss various asymptotic solutions which allow a significant acceleration of the numerical solver and improvement of its accuracy in that cases  (i.e, when parameters of the model obey the conditions to apply the corresponding asymptotic). For the sake of brevity we omit the exact derivations as they can be obtained similar to how this is done in \cite{ELVG}.

Fourth,  new boundary conditions are derived for the Put option in the GLVG. They are discrete and converge to the standard boundary conditions in the continuous case (Dupire). These conditions are constructed using some analog of discrete compounding which is natural for the LVG model.

And finally, we notice that for any piecewise model of the local variance/volatility at edge intervals where strikes are close either to 0 or to infinity one has to switch to the local variance linear in log-strike because of Roger Lee's moment formula. Thus, the whole local variance/volatility model becomes a combination of the original model at the internal intervals and local variance linear in log-strike at the edge intervals.

The other features of the GLVG model are pretty much inherited from the ELVG. For instance, similar to \cite{ELVG}, we show that given multiple smiles the whole local variance/volatility surface can be recovered that does not require solving any optimization problem. Instead, it can be done term-by-term by solving a system of non-linear algebraic equations for each maturity, which is faster.

\appendix
\appendixpage

\section{Numerically satisfactory solutions of \eqref{ode2hom} at $z \to \infty$.}  \label{App1}

According to \cite{Olver1997}, the numerically satisfactory fundamental solutions of \eqref{ode2hom} in the vicinity of {\bf singularity at $\boldsymbol{z = \infty}$}
are
\begin{align} \label{solInf}
y_1(x) &= z^m[z^{-A} \pFq{2}{1}{A, A-C+1, A-B+1,1/z}], \\
y_2(x) &= z^m[z^{-B} \ \pFq{2}{1}{B, B-C+1, B-A+1; 1/z}], \nonumber
\end{align}
\noindent where in our case $A = m-1, B = m, C = c$. This substitution transforms the second solution in \eqref{solInf} to
\begin{equation} \label{solInf2}
y_2(x) = z^m[z^{-m} \ \pFq{2}{1}{m, m-c+1, 2; 1/z}],
\end{equation}
\noindent and behaves well at $z \to \infty$. However, since in our setting $n \equiv A - B + 1 = m-1 - m + 1 = 0$, and due to the property
\begin{align*}
\lim_{c \to -n} \dfrac{F(a,b,c;z)}{\Gamma(c)} &= \dfrac{(a)_{n+1}(b)_{n+1}}{(n+1)!}z^{n+1} F(a+n+1,b+n+1,n+2;z), \\
y_1(x) &= F(m-1,m-c,0;z) = \Gamma(0)\dfrac{(m-1)_{1}(m-c)_{1}}{(1)!} z F(m,m-c+1,2;z), \nonumber
\end{align*}
\noindent it turns out that the first solution differs from the second one just by a constant multiplier, i.e. they are not  independent. Therefore, in this case instead the first solution $y_1(x)$  should be chosen based on a more sophisticated analytic continuation of the Hypergeometric function, \cite{bateman1953higher}.
\begin{align} \label{solInf1}
y_1(x) &= z^m[ (-z)^{1-m} \dfrac{\Gamma(c)}{\Gamma(m) \Gamma(c-m+1)} \Psi(z)], \quad |z| > 1, \ |\mathrm{ph}(-z)| < \pi,  \\
\Psi(z) &= 1 - \dfrac{1}{z}\sum^{\infty}_{k=0} \dfrac{(m-1)_{k+1} (m-c)_{k+1}}{k! (k+1)!} z^{-k} \left[\log(-z) + \phi_k)\right], \nonumber \\
\phi_k &\equiv \psi(k+1) + \psi(k+2) - \psi(m+k) - \psi(c - m - k), \nonumber \\
(m)_k &= \Gamma(m)/\Gamma(k), \quad \psi(x) = \Gamma'(x)/\Gamma(x). \nonumber
\end{align}

\section{No-arbitrage interpolation at $\chi \to \infty$.} \label{App2}

In this Appendix we prove the following Proposition:
\begin{proposition} \label{prop2}
Recall that according to \eqref{int2} the proposed interpolation scheme for $V(\chi, T_{j-1}, S)$ at the interval $x_{n_j} \le x < \infty$ reads
\begin{equation}
c(\chi) = V(\chi, T_{j-1}, S) = \gamma_\infty z^{-\nu}, \quad z = \chi + \frac{b_2}{a_2},
\end{equation}
\noindent where $\gamma_\infty > 0, \ \nu > 0$ are some constants determined below in the proof. Also this scheme preserves no-arbitrage.
\end{proposition}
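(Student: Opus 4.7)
The plan is to first pin down $\gamma_\infty$ and $\nu$ by two matching conditions at the left endpoint $\chi_{n_j}$ of the tail interval (continuity of $V$ and of $\partial_\chi V$ with the interior log-strike solution from Section~\ref{locVarLog}), and then to verify the standard no-arbitrage inequalities for the Put price reconstructed from $V$ via \eqref{P2V}. Writing $z_{n_j} = \chi_{n_j} + b_2/a_2$, the two matching equations
\begin{align*}
\gamma_\infty z_{n_j}^{-\nu} &= V(\chi_{n_j}^{-}, T_{j-1}, S), \\
-\nu\, \gamma_\infty z_{n_j}^{-\nu - 1} &= \partial_\chi V(\chi_{n_j}^{-}, T_{j-1}, S),
\end{align*}
solve uniquely for
\begin{align*}
\nu = -z_{n_j}\, \frac{\partial_\chi V(\chi_{n_j}^{-}, T_{j-1}, S)}{V(\chi_{n_j}^{-}, T_{j-1}, S)}, \qquad \gamma_\infty = V(\chi_{n_j}^{-}, T_{j-1}, S)\, z_{n_j}^{\nu}.
\end{align*}
For $K > K_{n_j}$ we have $AK > BS$, so $V = P - AK + BS$; no-arbitrage of the previous slice gives $V(\chi_{n_j}^-) \ge 0$, and because $V$ has to decay to zero as $\chi \to \infty$ while remaining non-negative, its slope at the tail's left edge is negative. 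Hence $\nu > 0$ and $\gamma_\infty > 0$, as claimed in the statement.

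Next I would verify the two substantive no-arbitrage inequalities in the tail. Direct differentiation gives
\begin{align*}
V_\chi = -\nu\, \gamma_\infty z^{-\nu - 1}, \qquad V_{\chi\chi} = \nu(\nu + 1)\, \gamma_\infty z^{-\nu - 2}.
\end{align*}
Using $K = Se^\chi$ and $\partial_K^2 P = K^{-2}(P_{\chi\chi} - P_\chi)$, and noting that in the tail $P = V + AK - BS$ gives $P_{\chi\chi} - P_\chi = V_{\chi\chi} - V_\chi$, convexity reduces to
\begin{align*}
V_{\chi\chi} - V_\chi = \nu\, \gamma_\infty z^{-\nu - 2}\bigl[(\nu + 1) + z\bigr] \ge 0,
\end{align*}
which is automatic from $\nu, \gamma_\infty, z > 0$. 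Monotonicity $\partial_K P = K^{-1} V_\chi + A \ge 0$ becomes $\nu\, \gamma_\infty z^{-\nu - 1} \le A\, Se^\chi$. At $\chi_{n_j}$ this is exactly the standard no-arbitrage bound $|\partial_K P(T_{j-1}, K_{n_j})| \le A(T_{j-1})$ inherited from the previous slice; for $\chi > \chi_{n_j}$ it is preserved because the left-hand side decays geometrically while the right-hand side grows exponentially in $\chi$. Finally, $V \to 0$ as $\chi \to \infty$ is immediate from $\nu > 0$, consistent with \eqref{bc}.

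The main obstacle I expect is the sign bookkeeping for $\nu$: the argument requires the previous slice's covered Put to be strictly decreasing at $\chi_{n_j}$, a mild regularity property that one inherits from the no-arbitrage calibration started at $T_0 = 0$ but must be propagated carefully through the induction. The only other delicate point is to make sure the intrinsic-value kink $[AK - BS]^+$ lies strictly to the left of $K_{n_j}$ so that the identity $V = P - AK + BS$ used above holds throughout the tail; if the calibration grid is chosen with $K_{n_j}$ below the kink, one simply shifts $\chi_{n_j}$ to the kink location before applying the same argument on the shifted tail.
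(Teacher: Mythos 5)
Your proposal is correct in substance, but it takes a genuinely different route from the paper. The paper fixes only $\gamma_\infty$ by matching the known Put \emph{value} at $K_{n_j}$, leaves $\nu$ as a free parameter, and then derives an explicit admissible range $0<\nu<\Omega$ (simplified via the bound $P_{n_j}\le A(T_j)K_{n_j}$) by computing $P'_K$ and $P''_K$ directly in $K$: convexity holds automatically, and monotonicity is obtained by checking $P'_K(K_{n_j})>0$ and invoking monotonicity of $P'_K$ in $K$. You instead impose a $C^1$ fit at $\chi_{n_j}$, which pins down both $\gamma_\infty$ and $\nu$ uniquely, and then verify convexity in log-strike and inherit the monotonicity inequality at the matching point from the arbitrage-free previous slice, propagating it rightward by the same monotone-comparison structure. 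Your version buys a smoother (value-and-slope) junction and a canonical choice of $\nu$; the paper's version buys robustness: it never needs the previous slice's derivative at $K_{n_j}$, nor the strict inequality $V_\chi(\chi_{n_j}^-)<0$ that your formula for $\nu$ requires (you correctly flag this, and also the kink-location caveat, as extra hypotheses to carry through the induction -- note in addition that $\gamma_\infty>0$ needs the Put to \emph{strictly} exceed intrinsic value at $K_{n_j}$, which the paper likewise uses). Two small technical remarks: your propagation step is cleanest stated as ``the left-hand side $\nu\gamma_\infty z^{-\nu-1}$ is decreasing in $\chi$ while $A\,Se^{\chi}$ is increasing,'' since the decay in $z$ is only polynomial, not geometric; and the inherited bound you need at $K_{n_j}$ is just $\partial_K P\ge 0$ rather than the two-sided bound you quote.
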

\begin{proof}
By construction, at $K \to \infty$, $c(\chi)$ converges to the correct boundary condition, i.e. vanishes. Assuming that $K_{n_j}$ is in-the-money, \eqref{int2} can be re-written in the form
\begin{equation} \label{PutInf}
P(K) = A(T_{j-1})K - B(T_{j-1})S + \gamma_\infty [\log(K/S) + b_2/a_2]^{-\nu}.
\end{equation}
As at this interval $v = b_2 + a_2 \log(K/S) > 0$, and it was assumed that $K > S$, we must have $a_2 > 0$. Accordingly, to have a positive Put price we require $\gamma_\infty > 0$. This constant could be determined by using a known Put value at $K_{n_j}$, i.e. $P(K_{n_j}) = P_{n_j}$. This yields
\begin{equation}
\gamma_\infty = [P_{n_j} - A(T_{j-1})K_{n_j} - B(T_{j-1})S]\left[\frac{b_2}{a_2} + \log\left( \frac{K_{n_j}}{S}\right) \right]^\nu > 0.
\end{equation}
Therefore, this definition is also consistent with the requirement of positiveness of $\gamma_\infty$.

As this is described in detail in \cite{ItkinLipton2017}, the no-arbitrage conditions for the Put price read
\[ P > 0, \quad P_K > 0, \quad P_{K,K} > 0. \]
Differentiating \eqref{PutInf} on $K$, and then again, we obtain
\begin{align} \label{difK1}
P'_K &= A(T_{j-1}) -  \frac{\gamma_\infty \nu}{K}  \left[\frac{b_2}{a_2} + \log\left( \frac{K}{S}\right) \right]^{-1-\nu},\\
P''_K &= \frac{\gamma_\infty \nu}{a_2 K^2} \left[\frac{b_2}{a_2} + \log\left( \frac{K}{S}\right) \right]^{-\nu-2}
[b_2 + a_2(1+\nu + \log(K/S))]. \nonumber
\end{align}
Analyzing these expressions we conclude that $P''_{K} > 0$. Observe that at $K \to \infty$ we also have $P'_K > 0$. Also observe that $P'_K$ is a monotone function of $K$. Therefore, let us look at $P'_K(K_{n_j})$. Substitution of $K=K_{n_j}$ into the first line of \eqref{difK1} yields
\begin{equation}
P'_K(K_{n_j}) = A(T_{j-1}) + \frac{a_2 \nu}{K_{n_j}(b_2 + a_2 \log(K/S)}
\left[A(T_{j-1}) K_{n_j} - B(T_{j-1})S - P_{n_j}\right].
\end{equation}
As the Put value exceeds its intrinsic value, $P'_K(K_{n_j})$ is positive if
\begin{equation} \label{nuSol}
0 < \nu < A(T_{j-1})  K_{n_j} \left[\frac{b_2}{a_2} + \log\left( \frac{K_{n_j}}{S}\right) \right]
\left[P_{n_j} - A(T_{j-1}) K_{n_j} + B(T_{j-1})S\right]^{-1} \equiv \Omega.
\end{equation}
At large $K_{n_j}$ the expression in the first square brackets is large, and in the second ones - small. Thus the upper boundary for $\nu$ is high enough.

Finally, we take into account the well-known upper bound of the Put option price which is, \cite{hull:97}
\[ P_{n_j} \le A(T_j) K_{n_j}. \]
Because of that, we can re-write \eqref{nuSol} as
\begin{equation} \label{nuSol1}
0 < \nu < \frac{A(T_{j-1})}{B(T_{j-1})}  \frac{K_{n_j}}{S} \left[\frac{b_2}{a_2} + \log\left( \frac{K_{n_j}}{S}\right) \right] \approx \frac{K_{n_j}}{S} \left[\frac{b_2}{a_2} + \log\left( \frac{K_{n_j}}{S}\right) \right] \le \Omega.
\end{equation}
Therefore, if $\nu$ is chosen according to \eqref{nuSol} or \eqref{nuSol1}, this guarantees that $P'_K(K_{n_j}) > 0$. As $P'_K(K)$ is a monotone function of $K$, this proves that with this choice of $\nu$ the condition $P'_K(K) > 0$ is valid at the whole interval $x_{n_j} \le x < \infty$. Thus, this interpolation preserves no-arbitrage.

\bs
\end{proof}

\section{No-arbitrage interpolation at $\chi \to -\infty$.} \label{App3}

In this Appendix we prove the following Proposition:
\begin{proposition} \label{prop1}
Recall that according to \eqref{int2} the proposed interpolation scheme for $V(\chi, T_{j-1}, S)$ at the interval $-\infty \le x < x_1$ reads
\begin{equation}
V(\chi, T_{j-1}, S) = \omega_0 e^{z}/z, \quad z = \chi + \frac{b_2}{a_2},
\end{equation}
\noindent where $\omega_0 = V(\chi_1, T_{j-1}, S) z_1 e^{-z_1} < 0$ is constant. Also this scheme preserves no-arbitrage.
\end{proposition}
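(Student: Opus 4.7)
The plan is to mirror the strategy used for Proposition~\ref{prop2}: first check that the proposed functional form satisfies the prescribed boundary behavior at $\chi\to-\infty$, then fix the free constant $\omega_0$ by imposing a matching condition at the right endpoint $\chi_1$, establish its sign, and finally verify the three no-arbitrage inequalities $P>0$, $P_K>0$, $P_{KK}>0$ on the interval $(-\infty,\chi_1]$.

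For the boundary, since $z=\chi+b_2/a_2$ satisfies $z\to-\infty$ as $\chi\to-\infty$, the factor $e^{z}/z$ vanishes exponentially, so $V(\chi)\to 0$ consistently with \eqref{bc}. To determine $\omega_0$ I would impose that the interpolant match the known value of the covered Put at the first quoted strike, $\omega_0 e^{z_1}/z_1 = V(\chi_1,T_{j-1},S)$, giving the stated $\omega_0 = V(\chi_1,T_{j-1},S)\,z_1 e^{-z_1}$. For the sign and well-posedness, recall that on this first interval the positivity of the local variance $v(\chi)=b_2+a_2\chi$ as $\chi\to-\infty$ forces $a_2=v^1_{j,0}<0$, while $b_2=v^0_{j,0}>0$; hence $b_2/a_2<0$. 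Assuming the leftmost strike $K_1$ is chosen so that $z_1=\chi_1+b_2/a_2<0$ (which is necessary anyway to keep $e^z/z$ non-singular on the whole interval), the positivity of $V(\chi_1,T_{j-1},S)$ combined with $e^{-z_1}>0$ yields $\omega_0<0$.

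For the no-arbitrage checks, observe that on this interval the Put is deep out-of-the-money, so $[A(T_{j-1})K-B(T_{j-1})S]^+=0$ and $P=V=\omega_0 e^{z}/z$. Using $dz/dK=1/K$ and direct differentiation I would obtain
\begin{equation*}
P_K = \frac{\omega_0 e^z (z-1)}{K z^2}, \qquad
P_{KK} = \frac{\omega_0 e^z (2-z)}{K^2 z^3}.
\end{equation*}
With $z<0$ throughout the interval, the three factors $e^z/z$, $(z-1)/z^2$ and $(2-z)/z^3$ are each negative; multiplying by the negative constant $\omega_0$ gives $P>0$, $P_K>0$ and $P_{KK}>0$ on $(-\infty,\chi_1]$, completing the no-arbitrage verification.

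The main obstacle is careful sign-bookkeeping rather than any deep analytic step, since $\omega_0$ and several powers of $z$ are negative and any sign slip propagates through all three inequalities. A minor point worth verifying explicitly is that $z$ stays strictly negative on the whole interval, so that the interpolant never encounters the pole of $e^z/z$ at $z=0$; this follows from $a_2<0$, $b_2>0$ and $\chi\le\chi_1<-b_2/a_2$, and is the only qualitative restriction the construction places on $\chi_1$.
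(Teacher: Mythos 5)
Your proposal is correct and follows essentially the same route as the paper's proof in \ref{App3}: fix $\omega_0$ by matching $V$ at the first quoted strike, use the out-of-the-money assumption (so $a_2<0$, $z<0$ on the interval, hence $\omega_0<0$), and verify $P>0$, $P_K>0$, $P_{KK}>0$ by direct differentiation --- your $z$-variable derivative formulas agree with the paper's expressions in \eqref{difK} after substituting $z=(b_2+a_2\chi)/a_2$. The only minor inaccuracy is your claim that positivity of the local variance forces $b_2>0$: it does not (with $a_2<0$ positivity as $\chi\to-\infty$ is automatic), but this is harmless since all your sign arguments only use $z\le z_1<0$, which is exactly what $v(\chi_1)>0$ with $a_2<0$ gives.
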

\begin{proof}
Obviously, at $K = K_{1}$ we have $\chi_{1} = \log(K_{1}/S), \ V(\chi, T_{j-1}, S) = V(\chi_{1}, T_{j-1}, S) \equiv V_{1}$, therefore, assuming the strike $K_{1}$ is out of the money
\begin{equation} \label{omega}
\omega_0 = V_{1}z_1 e^{-z_1} < 0.
\end{equation}
As this is described in detail in \cite{ItkinLipton2017}, the no-arbitrage conditions for the Put price read
\[ P > 0, \quad P_K > 0, \quad P_{K,K} > 0. \]
Based on \eqref{int2} and the definition of $V$ in \eqref{P2V}, the Put price at this interval can be represented as
\begin{equation} \label{Put0}
P(K, T_{j-1}, S) = \omega_0 e^{z}/z = \omega_0 e^{b_2/a_2} \frac{K/S}{\log(K/S) + b_2/a_2}.
\end{equation}
As at this interval $v = b_2 + a_2 \log(K/S)$, and it was assumed that $K < S$, we must have $a_2 < 0$. Accordingly, to have a positive Put price we require $\omega_0 < 0$. This is consistent with the value of $\omega_0$ introduced in \eqref{omega}.

Differentiating \eqref{Put0} on $K$, and then again, we obtain
\begin{align} \label{difK}
P'_K &= \dfrac{\omega_0 a_2 }{S} e^{b_2/a_2}
\frac{b_2 - a_2 + a_2 \log(K/S)}{(b_2 + a_2 \log(K/S))^2} > 0, \\
P''_K &= -\omega_0 \frac{a_2^2}{K S} e^{b_2/a_2}
\frac{b_2 - 2 a_2 + a_2 \log(K/S)}{(b_2 + a_2 \log(K/S))^3} > 0. \nonumber
\end{align}
Thus, the proposed scheme can be used for interpolation because it provides correct Put option prices at $K = K_1$ and $K \to 0$, and is monotone in $K$. Moreover, it preserves no-arbitrage.
\bs
\end{proof}

\section{No-arbitrage interpolation at $z \to 1$.} \label{App4}

As by definition in \eqref{homog2} $z = -\frac{a_2}{b_2}x$, this implies that
\[ 1 - z = 1 + \frac{a_2}{b_2}x = \frac{v_{ji}}{b_2}. \]
Obviously, $v_{ji} \ge 0$. Therefore, when $z$ is close to 1 two situations are possible:
\begin{enumerate}
\item $ z < 1$, which implies $b_2 > 0$, and accordingly $a_2 < 0$;
\item $ z > 1$, which implies $b_2 < 0$, and accordingly $a_2 > 0$.
\end{enumerate}
Suppose for interpolation of the Put price we use \eqref{linNew}, i.e.
\begin{align} \label{linNew1}
P(x) &= \gamma_0 + \gamma_2 x^2, \quad x_1 \le x \le x_3, \\
\gamma_0 &= \dfrac{P(x_3) x_1^2   - P(x_1) x^2_3}{x_1^2 - x_3^2} =
P_1 - \frac{P_3 - P_1}{x_3^2 - x_1^2}x_1^2 > 0, \qquad
\gamma_2 = \dfrac{P(x_1) - P(x_3)}{x_1^2 - x_3^2} > 0. \nonumber
\end{align}
The second inequality is obvious since $P(x_3) > P(x_1)$ if $x_3 > x_1$. The first one follows from the fact that the Put price exceeds its intrinsic value, i.e.
\[ P_i = [A(T_j)  K_i  - B(T_j) S]^+ + \varepsilon_i, \qquad \varepsilon_i > 0. \]
Suppose, e.g., that  both strikes $K_1, K_3$ are in-the-money. Then
\begin{align}
\gamma_0 &= P_1 - \frac{P_3 - P_1}{x_3^2 - x_1^2}x_1^2 =
P_1 - \frac{A(T_j)S(x_3 - x_1) + \varepsilon_3 - \varepsilon_1}{x_3^2 - x_1^2}x_1^2 \\
&= \frac{P_1 x_3 + x_1(P_1 - A(T_j)K_1)}{x_3 + x_1} + \frac{\varepsilon_1 - \varepsilon_3}{x_3^2 - x_1^2}x_1^2 > 0, \nonumber
\end{align}
\noindent as based on the properties of the Put price $\varepsilon_1 > \varepsilon_3$.

From \eqref{linNew1} it follows that
\begin{align} \label{intZ}
V &= \gamma_0 + \gamma_2 x^2 - A(T_j)S  x + B(T_j) S =
\bar{\gamma}_0 + \gamma_1 z + \bar{\gamma}_2 z^2, \\
\bar{\gamma_0} &= \gamma_0 + + B(T_j) S, \quad \gamma_1 = \frac{a_2}{b_2}A(T_j)S, \quad
\bar{\gamma}_2 = \gamma_2 \frac{a_2^2}{b_2^2}. \nonumber
\end{align}
It was proven in \cite{ELVG} that interpolation \eqref{linNew1} preserves no-arbitrage, and so that in \eqref{intZ}.  We use it when computing ${\cal J}_2(x)$ in \eqref{Jz1}.

\clearpage

\section*{References}

\newcommand{\noopsort}[1]{} \newcommand{\printfirst}[2]{#1}
  \newcommand{\singleletter}[1]{#1} \newcommand{\switchargs}[2]{#2#1}

\end{document}